\def\bbE{\mathbb{E}}
\def\bbI{\mathbb{I}}
\def\bbP{\mathbb{P}}
\def\bbR{\mathbb{R}}
\def\cD{{\mathcal D}}
\def\cL{{\mathcal L}}
\def\cO{{\mathcal O}}
\def\gd{\delta}
\def\gs{\sigma}
\def\gD{\Delta}
\def\gL{\Lambda}
\def\bfmath#1{\bm{#1}}
\def\bfi{{\bfmath{i}}}
\def\bfj{{\bfmath{j}}}
\def\bfk{{\bfmath{k}}}
\def\bfl{{\bfmath{l}}}
\def\tA{\theta_\A}
\def\tB{\theta_\B}
\def\A{{\mbox{\tiny $A$}}} % For subscripts and superscripts
\def\B{{\mbox{\tiny $B$}}}
\newcommand{\dd}{\mathop{}\!\mathrm{d}}
\def\ds{\ensuremath{\displaystyle}}
\DeclareMathOperator{\var}{Var}
\newtheorem{corollary}{Corollary}
\newtheorem{proposition}{Proposition}
\newtheorem{theorem}{Theorem}
\newcommand{\aref}[1]{Appendix~\ref{#1}}
\newcommand{\sref}[1]{Section~\ref{#1}}
\newcommand{\tref}[1]{Table~\ref{#1}}
\newcommand{\fref}[1]{Figure~\ref{#1}}
\newcommand{\cref}[1]{Chapter~\ref{#1}}
\newcommand{\propref}[1]{Proposition~\ref{#1}}
\newcommand{\thmref}[1]{Theorem~\ref{#1}}
\title{\vspace{-40pt}An estimator for the recombination rate from a continuously observed diffusion of haplotype frequencies}
\author{Robert C.~Griffiths\thanks{School of Mathematics, Monash University, 9 Rainforest Walk, Victoria 3800 Australia}~~and Paul A.~Jenkins\thanks{Department of Statistics, University of Warwick, Coventry CV4 7AL, UK} \thanks{Department of Computer Science, University of Warwick, Coventry CV4 7AL, UK} \thanks{The Alan Turing Institute, British Library, London NW1 2DB, UK}}
\date{\today}
\begin{document}
% =============================================================================

\maketitle

\singlespacing
\begin{abstract}
\noindent Recombination is a fundamental evolutionary force, but it is difficult to quantify because the effect of a recombination event on patterns of variation in a sample of genetic data can be hard to discern. Estimators for the recombination rate, which are usually based on the idea of integrating over the unobserved possible evolutionary histories of a sample, can therefore be noisy. Here we consider a related question: how would an estimator behave if the evolutionary history actually was observed? This would offer an upper bound on the performance of estimators used in practice. In this paper we derive an expression for the maximum likelihood estimator for the recombination rate based on a continuously observed, multi-locus, Wright--Fisher diffusion of haplotype frequencies, complementing existing work for an estimator of selection. We show that, contrary to selection, the estimator has unusual properties because the observed information matrix can explode in finite time whereupon the recombination parameter is learned without error. We also show that the recombination estimator is robust to the presence of selection in the sense that incorporating selection into the model leaves the estimator unchanged. We study the properties of the estimator by simulation and show that its distribution can be quite sensitive to the underlying mutation rates.\\

\noindent {\bf Keywords}: recombination, Wright--Fisher, diffusion, estimator
\end{abstract}

% =============================================================================

\onehalfspacing

\section{Introduction}
Recombination is a fundamental evolutionary force which shuffles genetic variation along a chromosome and gives rise to new haplotypes not previously seen in a population. It is a major goal of population genetics to infer rates of recombination along the genome and to disentangle its effects from other evolutionary forces such as mutation, selection, migration, and genetic drift. However, the effects of recombination can be difficult to detect; generally the signal of recombination is weak and a single recombination event may leave no discernible trace in a sample of genetic data \citep{hay:etal:biorxiv}. Typically one observes a sample from the state of the population only at the present day, while the evolutionary history of the population, which can be much more informative for recombination, is a latent, unobserved variable. A wide range of inferential methods tackle this problem by positing a generative reproductive model for the population and integrating over all possible evolutionary histories, or by approximating this idea. A popular model is the diffusion limit of a Cannings-type model for recombination, genetic drift, and mutation. Under this limit the evolution of haplotype frequencies follows the \emph{Wright--Fisher diffusion with recombination} \citep{oht:kim:1969:G, oht:kim:1969:GRC} while the genealogical history of a sample is known as the \emph{ancestral recombination graph (ARG)} \citep{gri:mar:1997}. Reconstruction of ARGs is a major current endeavour \citep[see][for recent review]{pen:wol:2020}, and with the very large samples available in recent datasets it becomes ever more necessary to introduce computational and/or model heuristics.

In this paper we address a related question: in the idealised situation in which one observes the entire evolutionary history of a population, as defined via the trajectory of haplotype frequencies in the diffusion limit, can we define an estimator for the recombination rate based on this observation and derive its properties? Although observing the entire sample path of a diffusion is unrealistic in practice, we may regard the corresponding estimator as setting an upper bound on the information about recombination available to us. We note that statistical inference from a continuously observed diffusion is by now a standard problem; see \citet{kut:2004} for textbook treatment for scalar diffusions (though regularity conditions imposed throughout that work preclude most of it applying to the Wright--Fisher diffusion even in one dimension). Further, advances in sequencing technologies are leading to growing availability of genetic data sampled from a population across different times, sometimes over very long timescales, and providing great potential for improved statistical inference \citep{deh:etal:2020}; such datasets can be considered as discrete, noisy versions of the idealised setting studied in this paper.

A motivation for this work is \citet{wat:1979} who derived the maximum likelihood estimator $\hat{s}$ for natural selection from an observation of the trajectory of a Wright--Fisher diffusion (here a diallelic, one-locus model comprising only selection and genetic drift). He found the complete distribution of the estimator. It is worth noting that in this model $\hat{s}$ does not enjoy the usual desirable asymptotic properties such as consistency, since one of the alleles will almost surely go extinct in finite time and thus the total information available about the parameter up to time $T$ remains finite as $T\to\infty$. If we introduce  bidirectional recurrent mutation to the model then it becomes ergodic, and \citet{san:etal:2022} have recently shown that in this situation the estimator enjoys the properties of consistency (uniformly over compact subsets of the parameter space) as well as asymptotic normality and asymptotic efficiency. We will see that, with or without mutation, the estimator for recombination behaves very differently to that of selection because the `information' (defined formally below) can become infinite in finite time. Essentially, in a model of selection the signal-to-noise ratio for the selection parameter remains finite on hitting a boundary of the simplex of possible frequencies, while for the recombination parameter it may not. We will see that if the information becomes infinite then the maximum likelihood estimator (MLE) for recombination becomes exact, $\hat{\rho}_{\text{MLE}} = \rho$.

The paper is structured as follows. In \sref{sec:likelihood} we summarise likelihood theory for a continuously observed diffusion and specialise it to the infinitesimal variance of a Wright--Fisher diffusion. In \sref{sec:MLE-theory} we derive the MLE for a general Wright--Fisher diffusion with arbitrary infinitesimal drift subject only to the constraint that the drift is linear in its unknown parameters. We then specialise this to the model of our primary interest, a multi-locus model with unknown recombination rate. Throughout we focus on the two-locus case which illustrates the main ideas without complicating the notation. Our main results are to derive an expression for the MLE and to show that if the information explodes then it is possible to learn the recombination parameter without error. \sref{sec:selection} studies the impact of the presence of selection on this estimator, and in \sref{sec:simulation} we conduct a simulation study to investigate the empirical properties of the MLE. We discuss some potential directions for future work in \sref{sec:discussion}.

\section{Likelihood in diffusion paths}
\label{sec:likelihood}
\subsection{General case}
\label{sec:general}
We first give a summary of general likelihood-based inference for the parameters of a diffusion before specialising to the Wright--Fisher diffusion. Let $\{X(t):\: t\geq 0\}$ be a $d$-dimensional diffusion process and suppose its path $\{X(t):\:t\in[0,T]\}$ is observed up to time $T$. The generator of the diffusion has a form
\begin{equation}
{\cal L} = \frac{1}{2}\sum_{i,j=1}^d V_{ij}(x)\frac{\partial^2}{\partial x_i\partial x_j} + \sum_{i=1}^d \mu_i(x;\varphi)\frac{\partial}{\partial x_i},
\label{gen:0} 
\end{equation}
where the model has $r$ parameters $\varphi = (\varphi_1,\dots,\varphi_r)^\top$ in a parameter space $\Theta$. We assume the drift $\mu = (\mu_1\dots,\mu_d)^\top$ can be written in the form
\begin{equation}
\label{eq:driftdecomp}
\mu(x; \varphi) = c(x) + a(x; \varphi),
\end{equation}
with $a(x;\varphi_0) \equiv 0$ for a fixed reference parameter $\varphi_0$, and $c(\cdot)$ does not contain any parameters to be estimated. (For example, later we will estimate the rate of recombination in the presence of recurrent mutation with the latter having rates fixed and known. Then $a(\cdot;\varphi)$ will correspond to the contribution of recombination while $c(\cdot)$ will correspond to the contribution of mutation, containing known mutation parameters.)

We will denote the corresponding path measure on continuous functions from $[0,T]$ to $\bbR^d$ by $\bbP^{(T)}_\varphi$.

If the $d \times d$ matrix $V = (V_{ij})$ is non-singular for almost all $t\in[0,T]$ and, for each $\varphi\in\Theta$,
\begin{equation}
\label{eq:information-finite}
\bbP^{(T)}_\varphi(I_{ij} < \infty,\, i,j=1,\dots,r) = 1,
\end{equation}
where $I_T=(I_{ij})$ is the $r\times r$ \emph{observed information matrix}
\begin{equation}
\label{eq:information}
I_T = \int_0^T Z(X(t);\varphi)^\top V^{-1}(X(t))Z(X(t);\varphi) \dd t, \qquad Z_{ij}(x;\varphi) = \frac{\partial a_i(x; \varphi)}{\partial \varphi_j},
\end{equation}
then the likelihood for $\varphi$ takes the form of a Radon--Nikodym derivative
\[
L_T(\varphi) = \frac{\dd \bbP_{\varphi}^{(T)}}{\dd \bbP_{\varphi_0}^{(T)}}
\]
given with respect to a dominating measure which here we have chosen to be the model parametrised by $\varphi_0$ so that $\bbP_{\varphi_0}^{(T)}$ is the distribution over paths with drift $c$. Under these conditions, the likelihood takes the form
\begin{equation}
L_T(\varphi) = \exp\left(\int_0^T a(X(t);\varphi)^\top V(X(t))^{-1}\dd \widetilde{X}(t) - \frac{1}{2}\int_0^Ta(X(t);\varphi)^\top V(X(t))^{-1}a(X(t);\varphi)\dd t \right),
\label{likelihood:0}
\end{equation}
where
\begin{equation}
\label{eq:Xtilde}
\widetilde{X}(t) = X(t) - \int_0^t c(X(s)) \dd s.
\end{equation}
The first integral in \eqref{likelihood:0} is with respect to the path $\{\widetilde{X}(t):\:t\in[0,T]\}$ and the second is with respect to $t$.

A heuristic way to understand equation \eqref{likelihood:0} is as follows. Let $\Delta X(t) = X(t+\Delta t) - X(t)$. The distribution of $\Delta X(t)$ given $X(t)=x$ is taken as approximately normal with mean $\mu(x)\Delta t$ and covariance matrix $V(x)\Delta t$  as $\Delta t \to 0$. If $V(x)$ is non-singular, then the quadratic form in the exponent of the normal density of $\Delta X(t)$ is
\begin{multline*}
\big [\Delta X(t) - \mu(x)\Delta t\big ]^\top \big [\Delta t V(X)\big ]^{-1}\big [\Delta X(t) - \mu(x)\Delta t\big ] =\\
\mu(x)^\top V(x)^{-1}\mu(x)\Delta t - 2\mu(x)^\top V(x)^{-1}\Delta X(t) + \cO((\Delta t)^2).
\label{normal:0}
\end{multline*}
We are expressing this density with respect to another normal density with mean $c(x)\Delta t$ and covariance matrix $V(x)\Delta t$, and thus we subtract the corresponding quadratic form
\[
c(x)^\top V(x)^{-1}c(x)\Delta t - 2c(x)^\top V(x)^{-1}\Delta X(t) + \cO((\Delta t)^2).
\]
After some rearrangement, letting $\Delta t \to 0$, and integrating from $0$ to $T$, we recover the quadratic form appearing in \eqref{likelihood:0}. Note that the likelihood ignores the terms $|V(x)|$ in the diffusion, since we assume that the likelihood is with respect to a parametric form only for $\mu$. (Statistical inference for parameters of $V$ would be trivial in this setting, since $V$ is identifiable from the path via its quadratic variation.) See \citet[Ch.~9]{bas:pra:1980} and \citet[Ch.~6]{klo:etal:2003} for further details on the general case.

\subsection{Wright--Fisher diffusion}
The family of Wright--Fisher diffusions has generator \eqref{gen:0} with diffusion coefficient of the form
\[
V_{ij}(x) = x_i(\delta_{ij}-x_j),
\]
where $\delta_{ij}$ denotes the Kronecker delta (i.e.~$\delta_{ij} = 1$ if $i=j$ and $\delta_{ij} = 0$ if $i\neq j$). The diffusion takes values in the simplex
\[
\Delta_{d-1} := \left\{ x \in [0,1]^d:\: \sum_{i=1}^d x_i = 1\right\},
\]
and the domain of $\cL$ is $\cD(\cL) = C^2(\gD_{d-1})$, twice continuously differentiable functions with domain $\Delta_{d-1}$. For now we continue to leave the drift in the form \eqref{eq:driftdecomp} but otherwise unspecified.

The matrix $V(x)$ is singular since $\sum_{i=1}^d x_i = 1$. Our first task, then, is to modify the results from Section \ref{sec:general} to accommodate this issue. We achieve this by studying the first $d-1$ coordinates of $X$, whose infinitesimal covariance matrix $V^*(x)$ is non-singular. Fortunately, its inverse $V^*(x)^{-1}$ takes on a particularly simple form, as we now show.

\begin{theorem}
\label{thm:likelihood}
Assume \eqref{eq:information-finite} holds for a Wright--Fisher diffusion with drift coefficient $\mu(x;\varphi) = c(x) + a(x; \varphi)$ and diffusion coefficient $V = (V_{ij})$, $V_{ij}(x) = x_i(\gd_{ij}-x_j)$. Then the likelihood is
\begin{equation}
	\label{likelihood}
L_T(\varphi) = \exp\left(\int_{0}^{T}\sum_{i=1}^d\frac{a_i(X(t);\varphi)}{X_i(t)}\dd \widetilde{X}_i(t) - \frac{1}{2}\int_0^T\sum_{i=1}^d\frac{a_i(X(t); \varphi)^2}{X_i(t)}\dd t\right),
\end{equation}
with $\widetilde{X}$ given by \eqref{eq:Xtilde}.
\end{theorem}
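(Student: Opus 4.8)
The plan is to reduce the degenerate $d$-dimensional problem to a non-degenerate one in $d-1$ coordinates, apply the general formula \eqref{likelihood:0}, and then exploit the simplex constraints to restore the symmetric sums appearing in \eqref{likelihood}. First I would pass to the marginal process $X^* = (X_1,\dots,X_{d-1})^\top$. Because $X$ lives on $\Delta_{d-1}$, the final coordinate $X_d = 1 - \sum_{i=1}^{d-1}X_i$ is a deterministic function of $X^*$, so the two paths carry identical information and the path measures are in bijection; the Radon--Nikodym derivative $L_T(\varphi)$ is therefore unchanged if computed from $X^*$. The infinitesimal covariance of $X^*$ is the leading $(d-1)\times(d-1)$ principal submatrix $V^*$ of $V$, and the whole point of working with $X^*$ is that $V^*$ is non-singular, so \eqref{likelihood:0} applies directly with $V$, $a$, and $\widetilde{X}$ replaced by $V^*$, $a^* = (a_1,\dots,a_{d-1})^\top$, and $\widetilde{X}^* = (\widetilde{X}_1,\dots,\widetilde{X}_{d-1})^\top$.

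The crux is an explicit inversion of $V^*$. Writing $V^* = D - x^*(x^*)^\top$ with $D = \mathrm{diag}(x_1,\dots,x_{d-1})$ and $x^* = (x_1,\dots,x_{d-1})^\top$, I would apply the Sherman--Morrison formula. The two simplifications that make everything collapse are $D^{-1}x^* = \mathbf{1}$ (the all-ones vector) and $(x^*)^\top D^{-1}x^* = \sum_{i=1}^{d-1}x_i = 1 - x_d$, whence $1 - (x^*)^\top D^{-1}x^* = x_d$. This yields the remarkably simple form
\[
(V^*)^{-1}_{ij} = \frac{\delta_{ij}}{x_i} + \frac{1}{x_d}, \qquad i,j = 1,\dots,d-1,
\]
a diagonal matrix plus a rank-one correction weighted by the inverse of the suppressed coordinate $x_d$.

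Substituting this inverse into \eqref{likelihood:0} splits each quadratic form into a diagonal part, namely $\sum_{i=1}^{d-1}a_i^2/x_i$ and $\sum_{i=1}^{d-1}(a_i/x_i)\dd\widetilde{X}_i$, plus rank-one parts proportional to $\big(\sum_{i=1}^{d-1}a_i\big)^2/x_d$ and $\big(\sum_{i=1}^{d-1}a_i\big)\big(\sum_{i=1}^{d-1}\dd\widetilde{X}_i\big)/x_d$. Here I would invoke the tangency constraints forcing the dynamics to stay on $\Delta_{d-1}$: since $\sum_{i=1}^d V_{ij} = 0$ holds automatically, the requirement $\sum_{i=1}^d \mu_i = 0$ together with $a(\cdot;\varphi_0)\equiv 0$ gives both $\sum_{i=1}^d c_i = 0$ and $\sum_{i=1}^d a_i = 0$. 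The first implies $\sum_{i=1}^d \widetilde{X}_i \equiv 1$ and hence $\sum_{i=1}^{d-1}\dd\widetilde{X}_i = -\dd\widetilde{X}_d$, while the second gives $\sum_{i=1}^{d-1}a_i = -a_d$. The rank-one contributions therefore become exactly $a_d^2/x_d$ and $(a_d/x_d)\dd\widetilde{X}_d$, which is precisely the missing $i=d$ term in each sum, promoting the partial sums over $i=1,\dots,d-1$ to the full symmetric sums over $i=1,\dots,d$ in \eqref{likelihood}.

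The main obstacle is not the algebra but the first step: rigorously justifying that the likelihood ratio computed from the $(d-1)$-dimensional marginal equals the Radon--Nikodym derivative of the full, degenerate path measure. This requires checking that passing to the deterministic relation $X_d = 1 - \sum_{i<d}X_i$ introduces no extra Jacobian or density factor --- intuitively clear because the map between the two path spaces is a fixed affine bijection independent of $\varphi$, so it cancels in the ratio, but it should be stated carefully. A secondary point worth verifying is that condition \eqref{eq:information-finite} for the full model transfers to the reduced model, i.e.\ that finiteness of the observed information is preserved under the passage to $X^*$, which follows from the same explicit expression for $(V^*)^{-1}$ used above.
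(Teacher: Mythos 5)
Your proposal is correct and follows essentially the same route as the paper's proof: reduce to the non-singular $(d-1)$-coordinate process, use the explicit inverse $[V^*(x)^{-1}]_{ij} = \delta_{ij}/x_i + x_d^{-1}$, and invoke the simplex constraints $\sum_{i=1}^d a_i = 0$ and $\sum_{i=1}^d \dd\widetilde{X}_i = 0$ to promote the partial sums back to the full symmetric sums over all $d$ coordinates. The only cosmetic differences are that you derive the inverse via Sherman--Morrison where the paper cites standard multivariate normal theory, and that you make explicit the (correct) observation that passing to the marginal process leaves the Radon--Nikodym derivative unchanged.
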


\begin{proof}
We consider the diffusion $(X_1(t),\dots,X_{d-1}(t))$ with drift $\mu^*(x;\varphi) = (\mu_1(x;\varphi),\dots,\mu_{d-1}(x;\varphi))^\top$ and non-singular $(d-1)\times(d-1)$ covariance matrix $V^*(x)$. Define $X_d(t) = 1-\sum_{i=1}^{d-1}X_i(t)$ and $\mu_d(x;\varphi) = -\sum_{i=1}^d\mu_i(x;\varphi)$. It follows from standard normal theory, for example \citet[p520--521]{ken:etal:1994}, that
\begin{equation}
\label{eq:Vinverse}
[V^*(x)^{-1}]_{ij} = \big (x_d^{-1} + x_i^{-1}\delta_{ij}\big ).
\end{equation}
We know that $\sum_{i=1}^d a_i(x;\varphi) = 0$ and $\sum_{i=1}^d \dd \widetilde{X}_i(t) = 0$ (since both $\sum_{i=1}^d \dd X_i(t) = 0$ and $\sum_{i=1}^d c_i(X(t)) = 0$, the latter required for $X$ to take values in $\gD_{d-1}$ when $\varphi = \varphi_0$), so
\begin{align}
a^*(X(t); \varphi)^\top V^*(X(t))^{-1} \dd \widetilde{X}(t) &= \sum_{i=1}^{d-1}\sum_{j=1}^{d-1}a_i(X(t); \varphi)(X_d(t)^{-1} + \delta_{ij}X_i(t)^{-1})\dd \widetilde{X}_j(t) \notag\\
&= \frac{a_d(X(t); \varphi)}{X_d(t)}\dd \widetilde{X}_d(t) + \sum_{i=1}^{d-1}\frac{a_i(X(t); \varphi)}{X_i(t)}\dd \widetilde{X}_i(t) \notag\\
&= \sum_{i=1}^d \frac{a_i(X(t); \varphi)}{X_i(t)}\dd \widetilde{X}_i(t), \label{eq:like1stterm}
\end{align}
with a similar calculation for
\begin{align}
a^*(X(t); \varphi)^\top V^*(X(t))^{-1} a^*(X(t); \varphi) &= \sum_{i=1}^{d-1}\sum_{j=1}^{d-1}a_i(X(t); \varphi)a_j(X(t); \varphi){V_{ij}^*(X(t))}^{-1} \notag\\
&= \sum_{i=1}^d \frac{a_i(X(t); \varphi)^2}{X_i(t)}.\label{eq:like2ndterm}
\end{align}
Substituting \eqref{eq:like1stterm} and \eqref{eq:like2ndterm} into \eqref{likelihood:0} yields \eqref{likelihood}.
\end{proof}

\section{Theory for maximum likelihood estimators}
\label{sec:MLE-theory}
\subsection{General Wright--Fisher diffusion}
\label{sec:MLE}
Our next goal is to derive an MLE for the parameters $\varphi$ of a Wright--Fisher diffusion. This is found by differentiating the log-likelihood with respect to the parameters. In all the examples we encounter, the drift is a linear function of the parameters so for the remainder of this article we assume $a(x;\varphi)$ to be of the form
\begin{equation}
\label{eq:linear}
a_i(x;\varphi) = \sum_{k=1}^r Z_{ik}(x)\varphi_k,
\end{equation}
where $Z_{ik}(x) = \frac{\partial a_i(x;\varphi)}{\partial \varphi_k}$ does not depend on $\varphi$. To avoid issues of identifiability we suppose that the columns of $Z = (Z_{ij})$ are linearly independent functions. Then from \thmref{thm:likelihood} the log-likelihood is a quadratic function
\[
\log L_T(\varphi) = \sum_{k=1}^r \varphi_k \int_{0}^{T}\sum_{i=1}^d\frac{Z_{ik}(X(t))}{X_i(t)}\dd \widetilde{X}_i(t) - \frac{1}{2}\sum_{k=1}^r\sum_{l=1}^r \varphi_k\varphi_l\int_0^T\sum_{i=1}^d\frac{Z_{ik}(X(t))Z_{il}(X(t))}{X_i(t)}\dd t,
\]
with a unique maximum, $\hat{\varphi}$, in $\bbR^r$, which is the solution of the set of equations for $k=1,\dots ,r$:
\begin{equation}
0 =\int_{0}^{T}\sum_{i=1}^d\frac{Z_{ik}(X(t))}{X_i(t)}\dd \widetilde{X}_i(t) -\sum_{l=1}^r \varphi_l \int_{0}^{T}\sum_{i=1}^d\frac{Z_{ik}(X(t))Z_{il}(X(t))}{X_i(t)}\dd t.
\label{general_mle}
\end{equation}
The equations \eqref{general_mle} are familiar in regression theory. Now denote the ($r\times 1$) vector $Y = \big (Y_{k}\big )$ with elements
\[
Y_k = \int_{0}^{T}\sum_{i=1}^d\frac{Z_{ik}(X(t))}{X_i(t)}\dd\widetilde{X}_i(t),
\]
and let $\Sigma (X(t)) = \text{diag}(X_i(t))$. Then the equations \eqref{general_mle} can be written
\[
\Biggl [\int_0^T Z(X(t))^\top\Sigma^{-1}(X(t)) Z(X(t)) \dd t\Biggr ]\hat{\varphi} = Y.
\]
Continuing to assume \eqref{eq:information-finite}, the matrix on the left-hand side of the previous equation is non-singular (see \citet[p223--224]{bas:pra:1980}, \citet[p231]{klo:etal:2003}) and hence we arrive at the form
\begin{equation}
\label{eq:varphi-hat}
\hat{\varphi} = \Biggl [\int_0^T Z(X(t))^\top\Sigma^{-1}(X(t)) Z(X(t)) \dd t\Biggr ]^{-1}Y.
\end{equation}

Of course if $\Theta \subset \bbR^k$ then it is not guaranteed that $\hat{\varphi} \in \Theta$, and $\hat{\varphi}$ must be adjusted appropriately to ensure it is the MLE. An example of this adjustment is given later.

The observed information matrix \eqref{eq:information} is a key quantity in telling us about how informative the data is for $\varphi$. For this model, the observed information matrix $I_T$ has elements
\begin{equation}
 I_{kl} = \int_{0}^{T}\sum_{i=1}^d
\frac{1}{X_i(t)}Z_{ik}(X(t))Z_{il}(X(t))
\dd t
\label{obsinf}
\end{equation}
using linearity of $a_i(x;\varphi)$; the information matrix does not depend on $\hat{\varphi}$. The expression \eqref{eq:varphi-hat} for $\hat{\varphi}$ can be written
\[
\hat{\varphi} = I_T^{-1}Y.
\]

\subsubsection{Deterministic model}
As a check on the expression for $\hat{\varphi}$, we can ask for the estimator we would obtain if the observed trajectory is that of the deterministic model
\begin{equation}
\frac{\dd x_i}{\dd t} = \mu_i(x;\varphi), \qquad i=1,\dots,d.
\label{deter:0}
\end{equation}
Now from \eqref{eq:driftdecomp} and \eqref{eq:Xtilde} we find
\[
\dd \widetilde{x} = \dd x - c(x) \dd t = (c(x) + a(x; \varphi))\dd t - c(x) \dd t = a(x; \varphi) \dd t,
\]
and so we can substitute this expression for $\dd \widetilde{x}$ into the likelihood equation \eqref{general_mle} to obtain that $\hat{\varphi}$ is a solution to
\begin{equation}
0 =\int_{0}^{T}\sum_{i=1}^d\frac{[a_i(x(t);{\varphi}) - a_i(x(t);{\hat{\varphi}})]}{x_i(t)}\frac{\partial a_i(x(t);\hat{\varphi})}{\partial \varphi_l}\dd t.\label{general_mle:1}
\end{equation}
Owing to the factor $a_i(x(t);{\varphi}) - a_i(x(t);{\hat{\varphi}})$, it is clear that a solution to the likelihood equation is given by $\hat{\varphi}=\varphi$. It is reassuring that the estimator is well behaved even in this crude level of approximation; the trajectory defined by \eqref{deter:0} is not a realisation from the assumed model since it is a path of bounded variation.

\subsection{Neutral two-locus model}
We now turn to our main result, an expression for the MLE for the recombination parameter $\rho\in [0,\infty) =: \Theta$. Consider a neutral two-locus model in which there are $K$ possible alleles at the first locus, locus A, and $L$ possible alleles at the second, locus B. The haplotype of an individual is denoted $(i,j) \in \{1,\dots,K\} \times \{1,\dots,L\}$, and its frequency in the population is $x_{ij}$. Note that to reconcile this double-index notation with previous sections we must implicitly stack the $KL$ possible haplotypes in some agreed order into a vector of length $d = KL$. We will switch between the two notations as required. To emphasise when haplotypes have been stacked we will use a bold index, so $x_{\bfi}$ denotes the frequency of haplotype $\bfi$, $\bfi = 1,\dots,d$.

The model is completed by specifying the drift. Here it is of the form
\begin{align*}
a_{ij}(x;\rho) &= \rho (x_{i\cdot}x_{\cdot j} - x_{ij}), \qquad i=1,\dots,K;\; j=1,\dots,L,
\end{align*}
where $x_{i\cdot} := \sum_{l=1}^L x_{il}$ and $x_{\cdot j} := \sum_{k=1}^K x_{kj}$. Recombination occurs between the two loci at rate $\rho$; specifically this is a model of the homologous crossing-over that takes place during meiosis. To simplify later results, we omit the conventional factor of $1/2$ in the recombination rate parameter.

In much of what follows the choice for $c(x)$ is immaterial, but for concreteness we will set
\begin{align*}
c_{ij}(x) &= \frac{\tA}{2}\sum_{k=1}^K x_{kj}(P_{ki}^\A - \delta_{ik}) + \frac{\tB}{2}\sum_{l=1}^L x_{il}(P_{lj}^\B - \delta_{jl}).
\end{align*}
Here mutation takes place at locus A and B at respective rates $\tA/2$ and $\tB/2$ on the timescale of the diffusion. When a mutation occurs, the change in allele is governed by the $K\times K$ and $L\times L$ mutation transition matrices $P^\A$ and $P^\B$ (i.e.~if a mutation occurs at locus A on haplotype $(k,j)$ then it mutates to haplotype $(i,j)$ with probability $P^\A_{ki}$, $i=1,\dots,K$; similarly for $P^\B$). We allow $\tA,\tB \geq 0$, so the model may or may not be ergodic.

Note the separate roles for the two components of the drift: here it is only $\rho$ to be estimated, with the other parameters appearing in $c(\cdot)$ considered known. The likelihood is expressed with respect to the parametrisation $\rho_0 = 0$, a model in which the two loci are completely linked but the mutation parameters are the same.

Using the results of \sref{sec:MLE}, for this model the log-likelihood is
\begin{align}
\log L_T(\rho) = {}& \rho\int_{0}^{T}\sum_{i=1}^K\sum_{j=1}^L\Biggl (\frac{X_{i\cdot}(t)X_{\cdot j}(t)}{X_{ij}(t)} - 1\Biggr )\dd\widetilde{X}_{ij}(t)\notag\\
& {}- \frac{1}{2}\rho^2\int_0^T\sum_{i=1}^K\sum_{j=1}^L\frac{(X_{ij}(t)-X_{i\cdot}(t)X_{\cdot j}(t))^2}{X_{ij}(t)}\dd t \nonumber\\
= {}&\rho\int_{0}^{T}\sum_{i=1}^K\sum_{j=1}^L\frac{X_{i\cdot}(t)X_{\cdot j}(t)}{X_{ij}(t)}\dd\widetilde{X}_{ij}(t) 
- \frac{1}{2}\rho^2\int_0^T\sum_{i=1}^K\sum_{j=1}^L\frac{(X_{ij}(t)-X_{i\cdot}(t)X_{\cdot j}(t))^2}{X_{ij}(t)}\dd t,
\label{twologlike}
\end{align}
where for the second equality we recall $\sum_{i=1}^K \sum_{j=1}^L \dd \widetilde{X}_{ij}(t) = 0$, with
\[
\widetilde{X}_{ij}(t) = X_{ij}(t) - \int_0^t c_{ij}(X(s))\dd s, \qquad i=1,\dots,K;\: j=1,\dots,L.
\]
The estimator $\hat{\rho}$ is therefore
\begin{equation}
\hat{\rho} =
\frac{
\ds\int_{0}^{T}\sum_{i=1}^K\sum_{j=1}^L\frac{X_{i\cdot}(t)X_{\cdot j}(t)}{X_{ij}(t)}\dd\widetilde{X}_{ij}(t) }
{\ds\int_0^T\sum_{i=1}^K\sum_{j=1}^L\frac{(X_{ij}(t)-X_{i\cdot}(t)X_{\cdot j}(t))^2}{X_{ij}(t)}\dd t},
\label{estimate}
\end{equation}
and the observed information is
\begin{equation}
\label{eq:information-recombination}
I_T = \int_0^T\sum_{i=1}^K\sum_{j=1}^L\frac{(X_{ij}(t)-X_{i\cdot}(t)X_{\cdot j}(t))^2}{X_{ij}(t)}\dd t.
\end{equation}
The denominator in $\hat{\rho}$ and the information can be simplified to
\[
I_T = \int_0^T\sum_{i=1}^K\sum_{j=1}^L \frac{X_{i\cdot}(t)^2X_{\cdot j}(t)^2}{X_{ij}(t)}\dd t -T.
\]
It is worth remarking on the functional form \eqref{twologlike} for $\log L_T(\rho)$. This is a polynomial in $\rho$ and we can think of a trade-off between the order of the polynomial and the complexity of its coefficients. In this model we have a particularly simple quadratic polynomial in $\rho$, order only two, with the benefit of knowing that the function is convex with a unique finite maximum (since the coefficient of $\rho^2$ is negative). The price we pay is that the coefficients of the polynomial are highly cumbersome in the sense that they are given as integrals over the sample path of a diffusion. Contrast this with the dual coalescent model in which the likelihood for an observed sample path of an ARG would be a product of exponential waiting time densities times a product of rational functions for the transitions of the jump chain. With many possible jumps, these rational functions may be constructed from polynomials in $\rho$ of very high order, though their coefficients are much simpler than the stochastic integrals encountered here. In a coalescent model, the shape of the likelihood curve as a function of $\rho$ can be rather complicated, even exhibiting local minima when integrating over ARGs \citep{jen:son:2009:G}.
\subsubsection{Deterministic model}
Is $\hat{\rho}$ in \eqref{estimate} a reasonable estimate? Again we can check what happens when $X$ solves a deterministic model. Setting $\tA=\tB=0$ for the moment, the deterministic model is
\begin{equation}
\frac{\dd x_{ij}}{\dd t} = \rho (x_{i\cdot}x_{\cdot j}-x_{ij}), \qquad \qquad i=1,\dots,K;\: j=1,\dots,L.
\label{de}
\end{equation}
and substituting $\dd x_{ij}$ directly into \eqref{estimate} again shows that $\hat{\rho} = \rho$.

We can further describe the evolution of $I_T$. Note that in this deterministic model (summing \eqref{de} over $j$):
\[
\frac{\dd x_{i\cdot}}{\dd t} = 0, \qquad i=1,\dots,K.
\]
Therefore $x_{i\cdot}(t) = x_{i\cdot}(0)$ for all $t \geq 0$, and similarly for $x_{\cdot j}(t)$. 
The solution to (\ref{de}) is then
\begin{equation}
x_{ij}(t) = x_{ij}(0)e^{-\rho t} + x_{i\cdot}(0)x_{\cdot j}(0)(1 - e^{-\rho t}).
\label{desoln}
\end{equation} 
We have that
\begin{equation*}
\log \left(\frac{x_{ij}(T)}{x_{ij}(0)}\right) = \int_0^T\frac{\dd x_{ij}}{x_{ij}} = \rho\left(\int_0^T \frac{x_{i\cdot}x_{\cdot j}}{x_{ij}}\dd t - T\right),
\end{equation*}
so (provided $\rho > 0$):
\begin{equation}
I_T = \rho^{-1}\sum_{i=1}^K\sum_{j=1}^Lx_{i\cdot}(0)x_{\cdot j}(0)\log\left(\frac{x_{ij}(T)}{x_{ij}(0)}\right).
\label{Info}
\end{equation}
The limit information is therefore
\[
\lim_{T\to\infty} I_T = \rho^{-1}\sum_{i=1}^K\sum_{j=1}^Lx_{i\cdot}(0)x_{\cdot j}(0)\log\left(\frac{x_{i\cdot}(0)x_{\cdot j}(0)}{x_{ij}(0)}\right).
\]
As far as the deterministic model goes, the information is in the transient phase until the frequencies come to equilibrium. The accumulated information $I_T$ remains finite as $T\to\infty$. In a stochastic model on the other hand, we will see that the injection of noise allows $I_T \to\infty$ as $T\to\infty$. We note that one should regard this contrasting behaviour with caution: it does \emph{not} mean that the estimator is consistent only in the stochastic setting. We have just seen that $\hat{\rho} = \rho$ in the deterministic setting, which is trivially consistent, and creates a paradox when we try to reconcile this fact with the asymptotic finiteness of $I_T$. The paradox is resolved by noting that the data-generating mechanism differs from the one assumed in designing the estimator. Had we assumed a deterministic model throughout our analysis then, since the parameter is simply a rate appearing in an observed ODE, the `likelihood' would be a point mass on the true rate and the MLE would be equal to that true rate. The `information' in this setting, being the curvature of the log-likelihood, is immediately infinite. The fact that $\hat{\rho} = \rho$ demonstrates that the estimator adapts automatically to a change in data-generating mechanism. The quantity $I_T$ could be regarded not as the information under the true model but as a way of quantifying `the informativeness of the deterministic trajectory under stochastic assumptions'. It is this quantity that remains finite as $T\to\infty$.

It is possible to repeat these calculations for a model with $\tA, \tB > 0$; that is, to solve the deterministic mutation-recombination equation. Again $I_T$ converges to a finite limit; see \aref{sec:appendix}.

\subsubsection{Stochastic differential equation interpretation}
We can find an expression for the error associated with $\hat{\rho}$ by regarding $X(t)$ as the solution to a stochastic differential equation (SDE):
\begin{equation}
\label{eq:SDE}
\dd X(t) = [c(X(t)) + a(X(t); \varphi)] \dd t + \gs(X(t)) \dd W(t), \qquad X(0) = x(0),
\end{equation}
where $W$ is a $(d-1)$-dimensional Brownian motion and $\gs(x)$ is a (non-unique) $d\times (d-1)$ matrix satisfying $\gs(x)\gs(x)^\top = V(x)$.

There are various ways to define $\gs$ subject to this constraint. It is common to ask for $\gs$ to be lower triangular by applying a Cholesky decomposition to $V$. In the case of the covariance matrix of the Wright--Fisher diffusion, the Cholesky decomposition is given analytically by \citet{sat:1976}, though it is one that explodes at the boundaries.

\begin{proposition}
\label{prop:error}
The error associated with $\hat{\rho}$ is
\begin{equation}
\hat{\rho} - \rho = -\frac{\displaystyle\int_{0}^{T} \sum_{\bfi=1}^{d} \frac{D_{\bfi}(t)}{X_{\bfi}(t)} \sum_{\bfj=1}^{d-1}\gs_{\bfi\bfj}(X(t)) \dd W_{\bfj}(t)}{\displaystyle\int_0^T \sum_{\bfi=1}^d \frac{D_{\bfi}(t)^2}{X_{\bfi}(t)}\dd t}, \label{eq:rho-error}
\end{equation}
where $D_{\bfi}(t) = X_{i_1i_2}(t) - X_{i_1\cdot}(t)X_{\cdot i_2}(t)$ is the coefficient of linkage disequilibrium for haplotype $\bfi = (i_1,i_2)$.
\end{proposition}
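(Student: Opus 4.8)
The plan is to substitute the SDE representation \eqref{eq:SDE} directly into the estimator \eqref{estimate}, so that the drift of the driving process contributes a term proportional to $\rho I_T$ while the diffusion part contributes exactly the stochastic integral appearing in \eqref{eq:rho-error}. First I would recast the numerator of \eqref{estimate} in terms of the linkage-disequilibrium coefficients $D_{\bfi}$. Since $\sum_{i=1}^K\sum_{j=1}^L\dd\widetilde{X}_{ij}(t)=0$, subtracting the (vanishing) integral $\int_0^T\sum_{i,j}\dd\widetilde{X}_{ij}$ leaves the numerator unchanged, giving
\[
\int_0^T\sum_{i=1}^K\sum_{j=1}^L\frac{X_{i\cdot}X_{\cdot j}-X_{ij}}{X_{ij}}\dd\widetilde{X}_{ij}=-\int_0^T\sum_{\bfi=1}^d\frac{D_{\bfi}(t)}{X_{\bfi}(t)}\dd\widetilde{X}_{\bfi}(t)
\]
in the stacked notation, while the denominator is exactly $I_T=\int_0^T\sum_{\bfi}D_{\bfi}^2/X_{\bfi}\dd t$ by \eqref{eq:information-recombination}.

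Next I would express $\dd\widetilde{X}_{\bfi}$ via the SDE. From \eqref{eq:Xtilde} we have $\dd\widetilde{X}_{\bfi}=\dd X_{\bfi}-c_{\bfi}(X)\dd t$, and inserting \eqref{eq:SDE} gives $\dd\widetilde{X}_{\bfi}=a_{\bfi}(X;\rho)\dd t+\sum_{\bfj=1}^{d-1}\gs_{\bfi\bfj}(X)\dd W_{\bfj}$. The crucial observation is that the recombination drift is precisely (minus) the linkage disequilibrium, $a_{\bfi}(X;\rho)=\rho(X_{i_1\cdot}X_{\cdot i_2}-X_{i_1i_2})=-\rho D_{\bfi}(t)$. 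Substituting and separating the finite-variation and martingale parts yields
\begin{align*}
-\int_0^T\sum_{\bfi=1}^d\frac{D_{\bfi}}{X_{\bfi}}\dd\widetilde{X}_{\bfi}
&=\rho\int_0^T\sum_{\bfi=1}^d\frac{D_{\bfi}^2}{X_{\bfi}}\dd t-\int_0^T\sum_{\bfi=1}^d\frac{D_{\bfi}}{X_{\bfi}}\sum_{\bfj=1}^{d-1}\gs_{\bfi\bfj}(X)\dd W_{\bfj}\\
&=\rho I_T-\int_0^T\sum_{\bfi=1}^d\frac{D_{\bfi}}{X_{\bfi}}\sum_{\bfj=1}^{d-1}\gs_{\bfi\bfj}(X)\dd W_{\bfj}.
\end{align*}
Dividing through by $I_T$ then gives $\hat{\rho}=\rho-(\text{martingale term})/I_T$, which rearranges to exactly \eqref{eq:rho-error}.

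This argument is essentially bookkeeping rather than hard analysis: the two points requiring care are keeping the double-index and stacked notations consistent and confirming that the It\^o integral in the numerator is well-defined. For the latter I would note that its quadratic variation equals $\int_0^T\sum_{\bfj}\big(\sum_{\bfi}\tfrac{D_{\bfi}}{X_{\bfi}}\gs_{\bfi\bfj}\big)^2\dd t$, which collapses to $I_T$ using $\gs\gs^\top=V$, $V_{\bfi\bfi'}=X_{\bfi}(\delta_{\bfi\bfi'}-X_{\bfi'})$, and $\sum_{\bfi}D_{\bfi}=0$; thus the finiteness assumption \eqref{eq:information-finite} guarantees a well-defined square-integrable martingale and no genuine obstacle arises. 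The real substance of the result is simply the cancellation by which the recombination drift reproduces $\rho I_T$ exactly, leaving a mean-zero stochastic integral as the estimation error.
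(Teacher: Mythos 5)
Your proof is correct and follows essentially the same route as the paper: rearrange \eqref{estimate} using $\sum_{\bfi}\dd\widetilde{X}_{\bfi}(t)=0$ to get $\hat{\rho}I_T=-\int_0^T\sum_{\bfi}\frac{D_{\bfi}(t)}{X_{\bfi}(t)}\dd\widetilde{X}_{\bfi}(t)$, substitute the SDE \eqref{eq:SDE} so the drift contributes $\rho I_T$ and the martingale part remains, then divide by $I_T$. Your added check that the stochastic integral has quadratic variation $I_T$ (hence is well-defined under \eqref{eq:information-finite}) is a nice touch that the paper defers to the proof of \thmref{thm:error}, but it does not change the argument.
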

\begin{proof}
Rearranging \eqref{estimate} slightly and using $\sum_{\bfi=1}^d \dd \widetilde{X}_{\bfi}(t) = 0$, we have
\[
\hat{\rho}I_T =-\int_{0}^{T}\sum_{\bfi=1}^d\frac{D_{\bfi}(t)}{X_{\bfi}(t)}\dd \widetilde{X}_{\bfi}(t).
\]
Now substituting for $\dd\widetilde{X}_{\bfi}(t) = \dd X_{\bfi}(t) - c_{\bfi}(X(t))\dd t$ using \eqref{eq:SDE},
\[
\hat{\rho}I_T = \rho I_T - \int_{0}^{T} \sum_{\bfi=1}^{d} \frac{D_{\bfi}(t)}{X_{\bfi}(t)} \sum_{\bfj=1}^{d-1}\gs_{\bfi\bfj}(X(t)) \dd W_{\bfj}(t),
\]
which leads to \eqref{eq:rho-error}.
\end{proof}
Thus the bias and mean squared error of $\hat{\rho}$ are given respectively by the expectation of the term on the right-hand side of \eqref{eq:rho-error} and the expectation of its square. Estimators of this form are not unbiased in general \citep[p218]{bas:pra:1980}.

\subsubsection{Corrected MLE}
\label{sec:corrected}
There are two problems with the estimator $\hat{\rho}$ defined in \eqref{estimate}. First, the parameter space is $\Theta = [0,\infty)$ but we cannot ensure $\hat{\rho} \geq 0$. (Although $\rho < 0$ is biologically unrealistic, mathematically it is nonetheless a valid model and a sample path may point to this region of the parameter space if $\dd \widetilde{X}_{ij}(t)$ is sufficiently negative.) This is easily corrected by applying a rectified linear unit, $\max\{0,\hat{\rho}\}$. The second issue is more serious: it is not guaranteed that \eqref{eq:information-finite} holds. In other words, we have not ruled out the possibility that $I_T$ explodes in finite time. For observations for which $I_T < \infty$, we can still interpret \eqref{twologlike} as a quasi-log-likelihood function \citep[p231]{klo:etal:2003}, but otherwise we must treat $L_T(\rho)$ as a \emph{generalized} density valid only until the stopping time
\begin{equation}
\label{eq:stopping}
S := \inf\left\{t\in [0,\infty):\: I_t = \infty\right\}.
\end{equation}
See \citet[Ch.~6]{lip:shi:2001:I} and \citet{mij:etal:2012} for further discussion on this subtle point. Writing $\hat{\rho} = \hat{\rho}_T$ for the estimator in \eqref{estimate}, we define the following corrected estimator:
\begin{equation}
\label{eq:rho-mle}
\hat{\rho}_{\text{MLE}} := \bbI_{[0,S)}(T)\max\left\{0,\hat{\rho}_T\right\} + \bbI_{[S,\infty)}(T)\lim_{t\uparrow S}\hat{\rho}_t.
\end{equation}
A similar issue arises in the estimation of the immigration rate of the continuous branching with immigration (CBI) diffusion, where a related correction is proposed \citep[in particular Theorem 2(iv)]{ove:1998}. The subscript in \eqref{eq:rho-mle} rather suggestively posits this quantity as \emph{the} MLE; this is proven shortly, in Corollary \ref{cor:MLE}. Although taking $\lim_{t\uparrow S}\hat{\rho}_t$ in \eqref{eq:rho-mle} might seem to be unstable, that this is the appropriate correction to our estimator is justified by the following theorem.
\begin{theorem}
\label{thm:error}
If $S \leq T$ then $\hat{\rho}_{\text{MLE}} = \rho$ with probability 1.
\end{theorem}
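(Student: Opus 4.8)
The plan is to exploit Proposition~\ref{prop:error}, which already expresses the estimation error as a ratio, and to reduce the theorem to a strong law of large numbers for the continuous local martingale appearing in its numerator. First I would note that when $S\le T$ the definition \eqref{eq:rho-mle} collapses to $\hat{\rho}_{\text{MLE}} = \lim_{t\uparrow S}\hat{\rho}_t$, so it suffices to show $\lim_{t\uparrow S}\hat{\rho}_t = \rho$ almost surely on this event. Writing
\[
M_t := \int_{0}^{t}\sum_{\bfi=1}^{d}\frac{D_{\bfi}(s)}{X_{\bfi}(s)}\sum_{\bfj=1}^{d-1}\gs_{\bfi\bfj}(X(s))\dd W_{\bfj}(s),
\]
Proposition~\ref{prop:error} reads $\hat{\rho}_t-\rho = -M_t/I_t$, so the whole theorem comes down to proving $M_t/I_t\to 0$ almost surely as $t\uparrow S$.

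The key step, and what makes the argument work, is the identity $\langle M\rangle_t = I_t$. Since $\gs\gs^\top = V$, the quadratic variation of $M$ has increment
\[
\dd\langle M\rangle_t = \sum_{\bfi,\bfi'=1}^d\frac{D_{\bfi}(t)D_{\bfi'}(t)}{X_{\bfi}(t)X_{\bfi'}(t)}V_{\bfi\bfi'}(X(t))\dd t,
\]
and using $V_{\bfi\bfi'}(x) = x_{\bfi}(\delta_{\bfi\bfi'}-x_{\bfi'})$ gives $V_{\bfi\bfi'}(x)/(x_{\bfi}x_{\bfi'}) = \delta_{\bfi\bfi'}/x_{\bfi'} - 1$. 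Summing, the cross term contributes $-\big(\sum_{\bfi}D_{\bfi}\big)^2$, which vanishes because $\sum_{\bfi}D_{\bfi} = \sum_{i,j}(X_{ij}-X_{i\cdot}X_{\cdot j}) = 0$; what remains is exactly $\sum_{\bfi}D_{\bfi}(t)^2/X_{\bfi}(t)$, the integrand of $I_t$ in \eqref{eq:information-recombination}. Hence $\langle M\rangle_t = I_t$, so the martingale driving the error and the observed information accumulate at precisely the same rate.

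With this identity in hand I would finish via the strong law for continuous local martingales. By definition of $S$ together with the monotonicity and continuity of $t\mapsto I_t$, on $\{S\le T\}$ we have $\langle M\rangle_t = I_t\uparrow\infty$ as $t\uparrow S$. Applying the Dambis--Dubins--Schwarz theorem, $M_t = B_{\langle M\rangle_t}$ for a standard Brownian motion $B$, whence
\[
\frac{M_t}{I_t} = \frac{B_{\langle M\rangle_t}}{\langle M\rangle_t}\longrightarrow 0 \quad\text{a.s. as }t\uparrow S,
\]
since $B_u/u\to 0$ almost surely as $u\to\infty$. Therefore $\lim_{t\uparrow S}(\hat{\rho}_t-\rho) = 0$ and $\hat{\rho}_{\text{MLE}}=\rho$ with probability $1$. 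The main obstacle is really the quadratic-variation identity $\langle M\rangle_t = I_t$: it is the coincidence of the noise's quadratic variation with the observed information that forces the signal-to-noise ratio to zero at exactly the moment the information explodes. Everything after it is a standard time-change argument; the only point needing care is to confirm that $I_t$ genuinely diverges (rather than merely failing to be defined) as $t\uparrow S$, which follows from the continuity and monotonicity of the information integral.
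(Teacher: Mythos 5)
Your proof is correct and follows essentially the same route as the paper's: reduce the claim via Proposition~\ref{prop:error} to showing the error ratio vanishes as $t\uparrow S$, establish the key quadratic-variation identity $\langle M\rangle_t = I_t$ (using $\sum_{\bfi} D_{\bfi} = 0$), and conclude by a strong law for continuous local martingales. The only difference is cosmetic: where the paper cites the law of large numbers for local martingales from Revuz--Yor, you prove that lemma directly via the Dambis--Dubins--Schwarz time change and the Brownian SLLN, which is the standard argument behind the cited result.
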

\begin{proof}
From the definition \eqref{eq:rho-mle} of $\hat{\rho}_{\text{MLE}}$ it suffices to show that $\hat{\rho} \to \rho$ as $T \uparrow S$. Let
\[
N_T = -\int_{0}^{T\wedge S} \sum_{\bfi=1}^{d} \frac{D_{\bfi}(t)}{X_{\bfi}(t)} \sum_{\bfj=1}^{d-1}\gs_{\bfi\bfj}(X(t)) \dd W_{\bfj}(t).
\]
This is a continuous, stopped martingale with $N_0 = 0$ and quadratic variation
\begin{align*}
\langle N \rangle_T &=  \left\langle-\int_{0}^{T\wedge S} \sum_{\bfi=1}^{d} \frac{D_{\bfi}(t)}{X_{\bfi}(t)} \sum_{\bfj=1}^{d-1}\gs_{\bfi\bfj}(X(t)) \dd W_{\bfj}(t)\right\rangle_T\\
&= \int_{0}^{T\wedge S}\sum_{\bfi=1}^d\sum_{\bfj=1}^{d-1} \frac{D_{\bfi}(t)}{X_{\bfi}(t)}\gs_{\bfi\bfj}(X(t)) \sum_{\bfk=1}^d  \frac{D_{\bfk}(t)}{X_{\bfk}(t)}\gs_{\bfk\bfj}(X(t))\dd t\\
&= \int_{0}^{T\wedge S}\sum_{\bfi=1}^d\sum_{\bfk=1}^{d} \frac{D_\bfi(t)}{X_{\bfi}(t)}\frac{D_{\bfk}(t)}{X_{\bfk}(t)} V_{\bfi\bfk}(X(t)) \dd t\\
&= \int_{0}^{T\wedge S}\sum_{\bfi=1}^d \frac{D_{\bfi}(t)^2}{X_{\bfi}(t)}\dd t\\
&= I_{T\wedge S}.
\end{align*}
where the second equality uses that $\langle\cdot,\cdot\rangle$ is a bilinear form and $\langle \dd W_{\bfj}, \dd W_{\bfl}\rangle = \gd_{\bfj\bfl}\dd t$. Thus by the law of large numbers for local martingales \citep[Ch.~V.1, Exercise 1.16, p186]{rev:yor:1999},
\[
\lim_{T\to\infty} \frac{N_T}{I_{T\wedge S}} = 0 \qquad \text{with probability 1 on }\{I_\infty = \infty\}.
\]
The limit as $T\uparrow S$ is the same. But $N_T/I_{T\wedge S}$ is precisely the error $\hat{\rho}-\rho$ given in \propref{prop:error}, so $\hat{\rho}\to\rho$ as $T\uparrow S$ with probability 1.
\end{proof}
\begin{corollary}
\label{cor:MLE}
$\hat{\rho}_{\text{MLE}}$ is the MLE for $\rho$.
\end{corollary}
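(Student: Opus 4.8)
The plan is to verify that the piecewise prescription \eqref{eq:rho-mle} coincides with the maximizer of the likelihood on each of the two complementary events $\{T < S\}$ and $\{T \geq S\}$. Throughout I would use that, writing $Y_T$ for the numerator integral in \eqref{estimate} so that $\hat{\rho}_T = Y_T/I_T$, completing the square in \eqref{twologlike} gives $\log L_T(\rho) = -\tfrac{1}{2}I_T(\rho - \hat{\rho}_T)^2 + \tfrac{1}{2}\hat{\rho}_T^2 I_T$, a concave quadratic in $\rho$ (with $I_T > 0$ almost surely).

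On $\{T < S\}$ we have $I_T < \infty$ by the definition \eqref{eq:stopping} of $S$, so the preceding display is a genuine concave quadratic with unconstrained maximizer $\hat{\rho}_T$. Since $\Theta = [0,\infty)$, and a concave function restricted to a half-line is maximized either at its interior stationary point (when that point is nonnegative) or at the boundary $0$ (otherwise), the constrained maximizer is exactly $\max\{0,\hat{\rho}_T\}$, which is the first term of \eqref{eq:rho-mle}. This case is therefore settled by a direct projection argument.

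On $\{T \geq S\}$ the likelihood is only a generalized density, defined up to the explosion time $S$, and here lies the one genuine difficulty: the usual definition of a maximizer does not apply to the degenerate object $L_S(\rho)$. The route I would take is to read off from $L_t(\rho) \propto \exp\!\big(-\tfrac{1}{2}I_t(\rho-\hat{\rho}_t)^2\big)$, valid for every $t < S$, that the likelihood is Gaussian-shaped in $\rho$ with precision $I_t$ centred at $\hat{\rho}_t$. As $t \uparrow S$ we have $I_t \to \infty$, so this curve collapses to a point mass at $\lim_{t\uparrow S}\hat{\rho}_t$, which equals $\rho$ almost surely by Theorem \ref{thm:error}. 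The correct reading of the MLE in this regime is thus the limit of the well-defined time-$t$ maximizers $\max\{0,\hat{\rho}_t\}$; since $\rho \in [0,\infty)$ the rectification becomes inactive in the limit, so $\max\{0,\hat{\rho}_t\} \to \rho = \lim_{t\uparrow S}\hat{\rho}_t$, matching the second term of \eqref{eq:rho-mle}. The main obstacle is precisely justifying that this limiting point mass deserves to be called the MLE; I would defend this exactly as for the analogous CBI immigration-rate estimator \citep[Theorem 2(iv)]{ove:1998}, by \emph{defining} the estimator beyond the explosion time as the limit of the ordinary maximizers below it, a choice legitimised by the Gaussian concentration above together with the almost-sure convergence supplied by Theorem \ref{thm:error}.
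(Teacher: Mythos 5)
Your proposal is correct, and its first half is exactly the paper's argument: on $\{T<S\}$ the log-likelihood \eqref{twologlike} is a finite concave quadratic, and projecting its unconstrained maximizer $\hat{\rho}_T$ onto $\Theta=[0,\infty)$ gives $\max\{0,\hat{\rho}_T\}$. Where you diverge is on the event $\{S\leq T\}$. The paper's proof is a one-line identifiability argument about the time-$T$ object itself: since the information has exploded, $\rho$ is identified exactly (\thmref{thm:error}), so the generalized density $L_T(\rho)$ vanishes at every value other than the true one, making the true value---which equals $\lim_{t\uparrow S}\hat{\rho}_t$ with probability 1---the maximizer in a direct, non-circular sense. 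You instead never touch the degenerate time-$T$ likelihood: you work with the well-defined pre-explosion likelihoods $L_t(\rho)\propto\exp\bigl(-\tfrac{1}{2}I_t(\rho-\hat{\rho}_t)^2\bigr)$ for $t<S$, observe the Gaussian collapse as $I_t\to\infty$, and \emph{define} the MLE beyond explosion as the limit of the constrained maximizers $\max\{0,\hat{\rho}_t\}$, which converges to $\rho=\lim_{t\uparrow S}\hat{\rho}_t$ by \thmref{thm:error} since $\rho\geq 0$. Both routes lean on \thmref{thm:error} and reach the same conclusion, but note the trade-off: your limiting construction is constructive and mirrors the definition \eqref{eq:rho-mle}, yet it carries a mild circularity---the corollary asserts that the quantity defined in \eqref{eq:rho-mle} maximizes the likelihood, and declaring the post-explosion MLE to be a limit of pre-explosion maximizers comes close to restating that definition rather than verifying it against a time-$T$ likelihood; you acknowledge this and patch it with the CBI precedent of \citet{ove:1998}. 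The paper's identifiability remark avoids this by giving the degenerate $L_T(\rho)$ a genuine meaning (it is zero off the truth, reflecting mutual singularity of the path measures $\bbP^{(T)}_\rho$ for distinct $\rho$ once $I_T=\infty$), so that ``maximizer'' needs no redefinition. Your Gaussian-concentration observation is, in substance, the same phenomenon that underlies the paper's singularity claim, so the gap is one of framing rather than of mathematics.
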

\begin{proof}
This follows since we have separately verified that it is the MLE on $\{T < S\}$ and on $\{S \leq T\}$. In the latter case $\rho$ is identifiable, so $L_T(\rho)$ is zero anywhere other than the true value.
\end{proof}
\begin{corollary}
The error associated with $\hat{\rho}_{\text{MLE}}$ is
\[
\hat{\rho}_{\text{MLE}} - \rho = -\bbI_{[0,S)}(T) \times \begin{cases}
\frac{\displaystyle\int_{0}^{T} \sum_{\bfi=1}^{d} \frac{D_{\bfi}(t)}{X_{\bfi}(t)} \sum_{\bfj=1}^{d-1}\gs_{\bfi\bfj}(X(t)) \dd W_{\bfj}(t)}{\displaystyle\int_0^T \sum_{\bfi=1}^d \frac{D_{\bfi}(t)^2}{X_{\bfi}(t)}\dd t}, & \hat{\rho} \geq 0,\\
\rho, & \hat{\rho} < 0,
\end{cases}
\]
where we recall that $\hat{\rho}$ is the uncorrected estimator given in \eqref{estimate}.
\end{corollary}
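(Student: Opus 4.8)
The plan is to establish the identity by partitioning the sample space according to whether the information has exploded by time $T$, and, on the non-exploded event, according to the sign of the uncorrected estimator. These three regions correspond exactly to the three situations encoded on the right-hand side, so the proof reduces to checking each case against the defining expression \eqref{eq:rho-mle} for $\hat{\rho}_{\text{MLE}}$ together with \propref{prop:error} and \thmref{thm:error}.

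First I would treat the event $\{S \leq T\}$. Here $\bbI_{[0,S)}(T) = 0$, so the claimed right-hand side is identically zero regardless of which branch of the case statement is nominally selected. It remains to verify that the left-hand side also vanishes, and indeed on $\{S \leq T\}$ the hypothesis of \thmref{thm:error} holds, giving $\hat{\rho}_{\text{MLE}} = \rho$ almost surely, so $\hat{\rho}_{\text{MLE}} - \rho = 0$. Both sides agree and this event contributes nothing further.

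Next I would treat the complementary event $\{T < S\}$, on which $\bbI_{[0,S)}(T) = 1$ and $\bbI_{[S,\infty)}(T) = 0$, so that \eqref{eq:rho-mle} collapses to $\hat{\rho}_{\text{MLE}} = \max\{0,\hat{\rho}_T\}$. I then split on the sign of the uncorrected estimator $\hat{\rho} = \hat{\rho}_T$ from \eqref{estimate}. When $\hat{\rho} \geq 0$ the rectifier is inactive, $\hat{\rho}_{\text{MLE}} = \hat{\rho}$, and the error $\hat{\rho}_{\text{MLE}} - \rho = \hat{\rho} - \rho$ is supplied directly by \propref{prop:error}; this reproduces the first branch, since on $\{T < S\}$ the prefactor $-\bbI_{[0,S)}(T)$ equals $-1$ and the bracketed ratio carries no sign, so their product is exactly the signed expression of \propref{prop:error}. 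When $\hat{\rho} < 0$ the rectifier sets $\hat{\rho}_{\text{MLE}} = 0$, whence $\hat{\rho}_{\text{MLE}} - \rho = -\rho$, which is precisely the second branch (again with prefactor $-1$). Assembling the three cases yields the stated formula.

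There is no substantive analytic obstacle: the result is a bookkeeping consequence of \propref{prop:error}, \thmref{thm:error}, and the piecewise definition \eqref{eq:rho-mle}. The only points requiring care are clerical, namely confirming that the overall sign in the first branch agrees with \propref{prop:error} once the prefactor $-\bbI_{[0,S)}(T)$ is taken into account, and checking that the boundary instant $T = S$ is assigned consistently. The latter falls in $\{S \leq T\}$ under the half-open convention of \eqref{eq:rho-mle}, where the error is zero in accordance with $\bbI_{[0,S)}(T) = 0$, so no ambiguity arises at the explosion time.
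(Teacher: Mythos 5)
Your proof is correct and takes exactly the paper's approach: the paper's own proof is the one-line statement that the corollary ``follows by combining \thmref{thm:error} and \propref{prop:error},'' and your case analysis on $\{S \leq T\}$ versus $\{T < S\}$ (with the further split on the sign of $\hat{\rho}$) is simply that combination written out explicitly against the definition \eqref{eq:rho-mle}. The sign bookkeeping and the assignment of the boundary instant $T = S$ are both handled correctly.
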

\begin{proof}
This follows by combining \thmref{thm:error} and \propref{prop:error}.
\end{proof}
The relevance of \thmref{thm:error} is: If the sample path is such that $I_T = \infty$, then we learn $\rho$ without error. Inspecting the form of $I_T$ in \eqref{eq:information-recombination}, we see that its integrand is locally integrable in the interior of $\gD_{d-1}$. Thus for $I_T = \infty$ it is necessary to have at least one haplotype frequency $X_{ij}(t) \to 0$ before time $T$. We should expect the same phenomenon when inferring the \emph{mutation} parameters in a one-locus model, where hitting one of the boundaries is completely informative for one of the mutation parameters.

The next result shows that having $I_T = \infty$ is not a hypothetical concern, and the proof makes it clear that explosion of $I_T$ is intimately related with hitting a boundary of $\gD_{d-1}$.

\begin{theorem}
\label{thm:explode}
Suppose that $\rho+\frac{\tA}{2}+\frac{\tB}{2} < \frac{1}{2}$, that mutation is parent-independent (i.e.~$P^\A$ and $P^\B$ each have identical rows), and that $x(0)$ lies in the interior of $\gD_{d-1}$. Then $\bbP(I_T = \infty) > 0$.
\end{theorem}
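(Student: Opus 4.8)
\emph{Proof proposal.} The plan is to exhibit a positive‑probability event on which a single haplotype frequency is driven to zero in finite time while the two relevant marginal frequencies stay bounded away from zero, and to show that on that event the integrand of $I_T$ in \eqref{eq:information-recombination} is non‑integrable. Since the recombination model is nontrivial we have $K,L\ge 2$, so a fixed haplotype $(i,j)$ has another haplotype in its row and in its column. Writing the $(i,j)$ coordinate as a scalar It\^o equation $\dd X_{ij}=\mu_{ij}(X)\dd t+\sqrt{X_{ij}(1-X_{ij})}\,\dd B_{ij}$ for a standard Brownian motion $B_{ij}$ (the martingale part of $X_{ij}$ has quadratic variation $\int X_{ij}(1-X_{ij})\dd t$), parent‑independence ($P^\A_{ki}=\pi^\A_i$, $P^\B_{lj}=\pi^\B_j$) lets me write $\mu_{ij}(x)=g(x)-\kappa\,x_{ij}$ with $g(x)=\tfrac{\tA}{2}\pi^\A_i x_{\cdot j}+\tfrac{\tB}{2}\pi^\B_j x_{i\cdot}+\rho\,x_{i\cdot}x_{\cdot j}$ and $\kappa:=\rho+\tfrac{\tA}{2}+\tfrac{\tB}{2}$, where crucially $0\le g(x)\le\kappa<\tfrac12$. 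I consider the event $E$ that $X_{ij}$ reaches $0$ at some time $\tau<T$ while every other haplotype frequency stays $\ge c$ on $[0,\tau]$ for a fixed small $c>0$; on $E$ the marginals obey $X_{i\cdot},X_{\cdot j}\ge c$ there.

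First I would show $E\subseteq\{I_T=\infty\}$. Expanding gives $D_{ij}^2/X_{ij}=X_{ij}-2X_{i\cdot}X_{\cdot j}+(X_{i\cdot}X_{\cdot j})^2/X_{ij}$, so, the first two terms being bounded on $[0,\tau]$, it suffices that $\int_0^\tau (X_{i\cdot}X_{\cdot j})^2/X_{ij}\,\dd t=\infty$, and since the marginals are $\ge c$ this is at least $c^4\int_0^\tau \dd t/X_{ij}$. The key lemma is that $\int_0^\tau \dd t/X_{ij}=\infty$ whenever $X_{ij}$ hits $0$ at a finite $\tau$. I prove this by It\^o's formula applied to $\log X_{ij}$, which yields $\log X_{ij}(t)=\log X_{ij}(0)+\int_0^t \tfrac{g-1/2}{X_{ij}}\dd s+(\tfrac12-\kappa)t+M_t$ with $M_t=\int_0^t\sqrt{(1-X_{ij})/X_{ij}}\,\dd B_{ij}$ and $\langle M\rangle_t\le A_t:=\int_0^t\dd s/X_{ij}$. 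Arguing by contradiction, if $A_\tau<\infty$ then, using $0\le g\le\kappa$, the drift integral converges (being sandwiched between $-\tfrac12 A_\tau$ and $(\kappa-\tfrac12)A_\tau$) and $M$ converges (its quadratic variation is finite), so the right‑hand side stays bounded as $t\uparrow\tau$, contradicting $\log X_{ij}(t)\to-\infty$. Hence $A_\tau=\infty$. This elementary argument, needing only $g\in[0,\kappa]$ and martingale convergence, makes precise the remark that explosion of $I_T$ is tied to hitting the boundary.

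Next I would show $\bbP(E)>0$. The bound $\mu_{ij}(x)\le\kappa(1-x_{ij})$ and the one‑dimensional comparison theorem (valid since $\sqrt{y(1-y)}$ satisfies the Yamada--Watanabe condition, the comparing drift being dominated pathwise by that of an autonomous SDE driven by the same $B_{ij}$) give $X_{ij}\le\bar Y$, where $\bar Y$ solves $\dd\bar Y=\kappa(1-\bar Y)\dd t+\sqrt{\bar Y(1-\bar Y)}\,\dd B_{ij}$, a one‑locus Wright--Fisher diffusion whose scale density is proportional to $y^{-2\kappa}$, integrable at $0$ precisely because $\kappa<\tfrac12$; thus $0$ is accessible for $\bar Y$, and $X_{ij}\ge 0$ forces $X_{ij}$ to $0$ when $\bar Y$ reaches $0$. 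To produce $E$ I combine this with control of the marginals, which (by parent‑independence, and because recombination preserves marginal allele frequencies, $\sum_l a_{il}=0$) are themselves autonomous one‑locus Wright--Fisher diffusions. Concretely, I use the Stroock--Varadhan support theorem in the interior, where the coefficients are smooth and the diffusion non‑degenerate, to steer $X$ with positive probability into a small ball around a point $p$ with $p_{ij}$ arbitrarily small and all other coordinates $\ge 3c$, at some $t_1<T$; I then restart by the strong Markov property. Choosing the residual window $h$ small keeps the marginals $\ge c$ with probability close to one, while choosing $p_{ij}$ small makes $\bar Y$, and hence $X_{ij}$, hit $0$ within $h$ with probability close to one (the hitting time of $0$ for $\bar Y$ started near $0$ is small with high probability); a union bound then forces both, giving $\bbP(E)>0$.

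The main obstacle is precisely this joint control: driving $X_{ij}\to 0$ while simultaneously keeping $X_{i\cdot}$ and $X_{\cdot j}$ bounded away from $0$, as all coordinates are coupled through the noise. The two‑stage maneuver decouples the requirements by making the critical window short, and it is here that the hypotheses are used in full: parent‑independent mutation both yields the clean decomposition with $g\le\kappa$ and renders the marginals autonomous, while $\rho+\tfrac{\tA}{2}+\tfrac{\tB}{2}<\tfrac12$ is exactly the condition making the boundary $\{x_{ij}=0\}$ accessible for the dominating diffusion $\bar Y$.
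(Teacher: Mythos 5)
Your proposal is correct, and it follows the same broad strategy as the paper --- produce, with positive probability, a path on which one haplotype frequency $X_{ij}$ hits zero before time $T$ while its marginals stay bounded away from zero, and show that $\int \dd t / X_{ij}$ then diverges so that \eqref{eq:information-recombination} explodes --- but the execution differs at every key step, and mostly to your advantage. For the divergence step, the paper compares $X_{ij}$ with a one-locus mutation-only diffusion $Z$ with drift $\tfrac{\vartheta}{2}(P-z)$ (valid only on the set $A_2$ where $x_{i\cdot}x_{\cdot j}<P$) and then cites Lemma 4.4 of \citet{bar:etal:2004} for the divergence of $\int \dd t/Z$ up to the hitting time of $0$; your key lemma instead proves divergence directly for $X_{ij}$ itself via It\^o's formula on $\log X_{ij}$ together with convergence of continuous local martingales of finite quadratic variation. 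This is self-contained, needs only $0\le g\le\kappa$ (indeed only boundedness of the drift), and it makes transparent that $\kappa<\tfrac12$ is used solely for \emph{accessibility} of the boundary, not for divergence; it also avoids the bookkeeping of transferring divergence from one process to another over mismatched time intervals. Your comparison diffusion $\bar Y$ with drift $\kappa(1-y)$ dominates $X_{ij}$ globally, so you dispense with the paper's set $A_2$ and its stopping time, while $2\kappa<1$ still gives an accessible boundary. Your expansion $D_{ij}^2/X_{ij} = X_{ij} - 2X_{i\cdot}X_{\cdot j} + (X_{i\cdot}X_{\cdot j})^2/X_{ij}$ reduces numerator control to keeping the marginals above $c$, where the paper instead imposes its condition (i) that $(x_{ij}-x_{i\cdot}x_{\cdot j})^2>\gd$ on all of $[0,T]$. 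Finally, the paper merely \emph{asserts} that its joint event has positive probability, whereas your two-stage construction (support theorem in the interior to reach a state with $X_{ij}$ small and all other coordinates bounded below, then a short window in which $\bar Y$ hits zero while the other coordinates barely move) supplies exactly the argument the paper leaves implicit; this is the most substantial contribution of your write-up. Two points to tighten: accessibility of $0$ for $\bar Y$ requires the full Feller boundary test, since integrability of the scale density alone is not sufficient in general (here the speed measure cooperates and $2\kappa<1$ is indeed the classical condition), and the claim that $T_0(\bar Y)$ is small with high probability when $\bar Y$ starts near $0$ needs the standard expected-exit-time (Green's function) estimate; both are routine but should be stated.
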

\begin{proof}
It is clear from the form of $I_T$ in \eqref{eq:information-recombination} that $\{I_T = \infty\}$ will occur if for some $i,j$,
\begin{enumerate}[(i)]
\item For some $\gd > 0$ and for all $t \in [0,T]$, $X(t)$ lies in $A_1 := \{x \in \gD_{d-1}:\: (x_{ij} - x_{i\cdot}x_{\cdot j})^2 > \gd\}$;
\item $T_{\varepsilon}(X_{ij}) := \inf\{t\in[0,\infty):\: X_{ij}(t) = \varepsilon\}$, the first hitting time of $\varepsilon$ by $X_{ij}$, satisfies $T_{0}(X_{ij}) \in (0,T]$;  and
\item The integral
\[
\int_0^{T_{\varepsilon}(X_{ij})} \frac{1}{X_{ij}(t)} \dd t
\]
diverges as $\varepsilon \to 0$. 
\end{enumerate}
Condition (iii) extracts the explosion of $I_T$ from a denominator of its integrand, while condition (i) controls the corresponding numerator. Condition (ii) ensures that such explosion takes place before time $T$.

To study the finiteness or otherwise of the integral in (iii), choose a decomposition $\sigma(x)\sigma(x)^\top = V(x)$ so that the component of the SDE \eqref{eq:SDE} corresponding to $X_{ij}(t)$ has the form
\[
\dd X_{ij}(t) = \mu_{ij}(X_{ij}(t))\dd t + \sqrt{X_{ij}(t)(1-X_{ij}(t))} \dd W(t), \qquad X_{ij}(0) = x_{ij}(0),
\]
for a scalar Brownian motion $W$, where
\[
 \mu_{ij}(X_{ij}(t)) = \rho [X_{i\cdot}(t)X_{\cdot j}(t) - X_{ij}(t)] + \frac{\tA}{2}[X_{\cdot j}(t)P_i^\A - X_{ij}(t)] + \frac{\tB}{2}[X_{i\cdot}(t)P_j^\B - X_{ij}(t)].
\]
The idea is to show that this SDE behaves locally like a one-locus model of \emph{mutation only}. More precisely we will compare $X_{ij}$ to another diffusion which solves the SDE
\[
\dd Z(t) = \frac{\vartheta}{2}[P - Z(t)]\dd t + \sqrt{Z(t)(1-Z(t))} \dd W(t), \qquad Z(0) = x_{ij}(0),
\]
for some $\vartheta \in [0,1)$, $P \in (0,1)$. Choose $\vartheta$ so that $\rho + \frac{\tA}{2} + \frac{\tB}{2} < \frac{\vartheta}{2}$ and choose $P$ so that $x_{i\cdot}(0)x_{\cdot j}(0) < P$, $P^\A_i < P$, and $P^\B_j < P$. Then on the set $A_2 := \{ x \in \gD_{d-1}:\: x_{i\cdot}x_{\cdot j} < P\}$ it is straightforward to verify we have
\[
\mu_{ij}(x) < \frac{\vartheta}{2}(P - x),
\]
and thus by a standard comparison theorem \citep[see Theorem 1.1 and Remark 1.1 in][]{ike:wat:1977}
we can construct a probability space on which $Z(t) \geq X_{ij}(t)$ for all $t\in[0,T_{A_2^\complement})$, where $T_A := \inf\{t\in[0,\infty):\: X(t) \in A\}$. (For the comparison theorem to hold there is a required growth condition on the diffusion coefficient. That this holds follows from the fact that $\sqrt{x(1-x)}$ is $1/2$-H\"older continuous; see also Remark 3.9 on p298 of \citet{eth:kur:1986}.) Thus condition (iii) is implied by the a.s.~divergence of
\[
\int_0^{T_{\varepsilon}(Z)} \frac{1}{Z(t)} \dd t
\]
as $\varepsilon \to 0$, which in turn follows from Lemma 4.4 of \citet{bar:etal:2004}, noting that $\vartheta < 1$ guarantees the 0-boundary for $Z$ is accessible. [Some errors in the proof of Lemma 4.4 are corrected by \citet{tay:2007}.] Tracing our steps backwards, we have shown that condition (iii) holds provided $0< T_0(X_{ij}) \leq T < T_{A_2^\complement}$. Since
\[
\bbP(T_{A_1^\complement} > T,\, 0< T_0(X_{ij}) \leq T < T_{A_2^\complement}) > 0,
\]
we conclude $\bbP(I_T = \infty) > 0$.
\end{proof}
The conditions given in \thmref{thm:explode} simplify our proof, but it seems feasible to substantially weaken them.
\subsection{Testing for the presence of recombination}
It is possible to use $\hat{\rho}_{\text{MLE}}$ to design a likelihood ratio test for the null hypothesis that $\rho_0 = 0$. Using \eqref{twologlike}, the appropriate likelihood ratio statistic is, for $I_T < \infty$,
\[
\gL := 2\log \frac{\dd\bbP^{(T)}_{\hat{\rho}_{\text{MLE}}}}{\dd\bbP^{(T)}_{\rho_0}} = \hat{\rho}_{\text{MLE}}^2I_T, \qquad I_T < \infty.
\]
Under standard assumptions, noting that $\rho_0 = 0$ lies on the boundary of $\Theta$, this has an asymptotic distribution which is an equal mixture between a $\chi^2_1$ distribution and a $\chi^2_0$ distribution under the null hypothesis \citep{sel:lia:1987}. Denote the CDF of this distribution by $F_m$. In particular, to construct a level 5\% test one should reject $\rho_0 = 0$ if $\gL$ exceeds the 95th percentile of $F_m$; equivalently if it exceeds the 90th percentile of a $\chi^2_1$ distribution.

To account for the possibility that $I_T = \infty$ we set
\[
\gL := \begin{cases}
+\infty, & \hat{\rho}_{\text{MLE}} > 0,\\
0, & \hat{\rho}_{\text{MLE}} = 0.
\end{cases}, \qquad I_T = \infty.
\]
The asymptotic null distribution for $\gL$ is now less clear, though we note that continuing to assume $F_m$ would be conservative. We study the power of this test empirically in \sref{sec:simulation}.

\subsection{Multiple loci}
It is possible to extend the above results to a general multi-locus model. The extension is straightforward and we omit many of the lengthy but straightforward calculations.

In a multi-locus model of $\ell$ loci with $K_j$ possible alleles at locus $j$, haplotypes are of the form $\bfi = (i_1,\dots,i_\ell) \in \prod_{j=1}^\ell \{1,\dots,K_j\} =: E$ in a diffusion on $\gD_{d-1}$ with $d = \prod_{j=1}^\ell K_j$ coordinates. Stacking the haplotypes, the diffusion coefficient has entries $V_{\bfi\bfk}(x) = x_{\bfi}(\gd_{\bfi\bfk} - x_{\bfk})$ as usual, and the unknown component of the drift is
\[
a_\bfi(x;\rho_1,\dots,\rho_{\ell-1}) = \sum_{j=1}^{\ell-1} \rho_j(x_{\bfi_{\leq j}}x_{\bfi_{>j}} - x_\bfi),
\]
where $\rho_j$ is the recombination rate between locus $j$ and $j+1$, with each $\rho_j$ to be estimated; $x_\bfi$ is the frequency of haplotype $\bfi$; and we marginalize over a contiguous subset of loci by writing
\begin{align*}
	x_{\bfi_{\leq j}} &= \sum_{i_{j+1}=1}^{K_{j+1}} \cdots \sum_{{i_\ell}=1}^{K_{\ell}} x_{(i_1,\dots,i_\ell)}, &
	x_{\bfi_{> j}} &= \sum_{{i_1}=1}^{K_1} \cdots \sum_{i_j=1}^{K_j} x_{(i_1,\dots,i_\ell)}.
\end{align*}
From \eqref{eq:varphi-hat} and \eqref{obsinf} the joint estimator for $(\rho_1,\dots,\rho_{\ell-1})$ is $\hat{\varrho} = I_T^{-1}Y$ where $I_T$ is $(\ell-1)\times(\ell-1)$ and $Y$ is $(\ell-1)\times 1$ with elements
\begin{align*}
	I_{jk} &= \int_0^T \sum_{\bfi\in E} \frac{\left(X_{\bfi_{\leq j}}(t)X_{\bfi_{> j}}(t) - X_\bfi(t)\right)\left(X_{\bfi_{\leq k}}(t)X_{\bfi_{> k}}(t) - X_\bfi(t)\right)}{X_\bfi(t)}\dd t\\
	&= \int_0^T \sum_{\bfi\in E} \frac{X_{\bfi_{\leq j}}(t)X_{\bfi_{> j}}(t)X_{\bfi_{\leq k}}(t)X_{\bfi_{> k}}(t)}{X_\bfi(t)}\dd t - T,\\
	Y_j &= \int_{0}^{T}\sum_{\bfi\in E} \frac{X_{\bfi_{\leq j}}(t)X_{\bfi_{>j}}(t) - X_\bfi(t)}{X_\bfi(t)} \dd\widetilde{X}_\bfi(t) = \int_{0}^{T}\sum_{\bfi\in E} \frac{X_{\bfi_{\leq j}}(t)X_{\bfi_{>j}}(t)}{X_\bfi(t)} \dd\widetilde{X}_\bfi(t).
\end{align*}

An alternative model is to set $\rho_j = \rho$ for each $j=1,\dots,\ell$ and to construct a single scalar estimator. Then the estimator is
\[
\hat{\varrho} = \frac{\displaystyle\sum_{j=1}^{\ell-1}\int_0^T\sum_{\bfi\in E} \frac{X_{\bfi_{\leq j}}(t)X_{\bfi_{>j}}(t)}{X_\bfi(t)} \dd\widetilde{X}_\bfi(t)}{\displaystyle\sum_{j,k=1}^{\ell-1}\int_0^T \sum_{\bfi\in E} \frac{X_{\bfi_{\leq j}}(t)X_{\bfi_{> j}}(t)X_{\bfi_{\leq k}}(t)X_{\bfi_{> k}}(t)}{X_\bfi(t)}\dd t - (\ell-1)^2 T}.
\]
These estimators should be corrected as in \eqref{eq:rho-mle}.

\section{The effects of natural selection}
\label{sec:selection}
Methods for inference of recombination can be confounded by natural selection \citep{ree:tis:2006, ore:etal:2008, pen:wol:2020}. In this section we investigate the effect of selection on $\hat{\rho}_{\text{MLE}}$, for simplicity returning to a two-locus model, though it should be straightforward to extend these results to general multi-locus models.
\subsection{Confounding by selection}
\label{sec:confounding-sel}
First consider the following: Suppose that, unknown to the investigator, the two loci are under selection---possibly a complicated type with epistatic interaction. What effect does this have on our estimator for $\rho$? More precisely, consider a model in which the component of the drift with parameters to be estimated is still $a_{ij}(x;\rho) = \rho (x_{i\cdot}x_{\cdot j} - x_{ij})$, but the `known' component of the drift is now
\begin{align*}
c_{ij}(x) = {} & \frac{\tA}{2}\sum_{k=1}^K x_{kj}(P_{ki}^\A - \delta_{ik}) + \frac{\tB}{2}\sum_{l=1}^L x_{il}(P_{lj}^\B - \delta_{jl})\\
& {}+ \frac{x_{ij}}{2}\left[\sum_{k=1}^K\sum_{l=1}^L\left(s_{ij,kl}x_{kl} - \sum_{m=1}^K\sum_{n=1}^Ls_{kl,mn}x_{kl}x_{mn}\right)\right], \qquad i=1,\dots,K;\; j=1,\dots,L.
\end{align*}
This is a very general diploid, epistatic model of selection in which the selective advantage of an individual carrying haplotypes $(i,j)$ and $(k,l)$, relative to other individuals, is parametrised by $s_{ij,kl}$. We are interested in the role of selection as a confounder, whereby inference is carried out using the incorrect selection parameters in the dominating measure.

From equation \eqref{estimate}, $c(\cdot)$ has an effect on $\hat{\rho}$ only through the term $\dd \widetilde{X}_{ij}(t) = \dd X_{ij}(t) - c_{ij}(X(t))\dd t$. Therefore, in a model with selection we should adjust \eqref{estimate} by defining a new estimator
\begin{align}
\hat{\rho}_{\text{sel}} = {} &\hat{\rho} - \frac{1}{I_T}\int_0^T \sum_{i=1}^K\sum_{j=1}^L\frac{X_{i\cdot}(t)X_{\cdot j}(t)}{X_{ij}(t)}\notag\\
& \phantom{\hat{\rho} - \frac{1}{I_T}\int_0^T \sum_{i=1}^K\sum_{j=1}^L} {}\times\frac{X_{ij}(t)}{2}\left[\sum_{k=1}^K\sum_{l=1}^L\left(s_{ij,kl}X_{kl}(t) - \sum_{m=1}^K\sum_{n=1}^Ls_{kl,mn}X_{kl}(t)X_{mn}(t)\right)\right]\dd t\notag\\
= {} & \hat{\rho} - \frac{1}{I_T}\int_0^T \sum_{i=1}^K\sum_{j=1}^L(X_{i\cdot}(t)X_{\cdot j}(t) - X_{ij}(t))\sum_{k=1}^K\sum_{l=1}^L\frac{s_{ij,kl}}{2}X_{kl}(t)\dd t.\label{eq:rho-sel}
\end{align}
The last term, which is linear in the selection parameters, quantifies the error introduced by ignoring selection. However, it demonstrates a remarkable property in the absence of epistasis. In that case we can write $s_{ij,kl} = s^\A_{ik} + s^\B_{jl}$, where $s^\A_{ik}$ is the selection parameter associated with genotype $ik$ at locus A, and similarly for $s^\B_{jl}$. Then equation \eqref{eq:rho-sel} simplifies to
\[
\hat{\rho}_{\text{sel}} = \hat{\rho}.
\]
That is, we have an attractive robustness property: if an investigator uses the incorrect model for selection then the estimator $\hat{\rho}$ is unaffected provided selection is not epistatic. The observed information is also the same. Noting that \thmref{thm:error} continues to hold when $c(\cdot)$ is altered, we conclude that $\hat{\rho}_{\text{MLE}}$ defined in \sref{sec:corrected} is still the MLE for $\rho$ in the presence of (non-epistatic) selection.

\subsection{General confounding}
Returning to the general inference problem of \sref{sec:MLE}, we can generalise the previous observations by asking: when does a contribution ${c}(x)$ to the drift leave the estimator $\hat{\varphi}$ unchanged? From \eqref{likelihood:0}, its contribution to the estimator via $\dd \widetilde{X}(t)$ will be zero if and only if $a(X(t);\varphi)^\top V(X(t))^{-1}c(X(t))$ does not depend explicitly on $\varphi$. When the drift is linear in the parameters as in \eqref{eq:linear}, this requirement becomes
\begin{itemize}
\item[($\ast$)] ${\varphi}^\top Z^\top V^{-1}c$ does not depend on $\varphi$.
\end{itemize}
If ($\ast$) holds we will say that the estimation problem for $\hat{\varphi}$ is \emph{robust} to the contribution of $c(x)$. In the context of the Wright--Fisher diffusion we have the following result.
\begin{proposition}
\label{prop:noconfounding}
For a Wright--Fisher diffusion with drift coefficient $\mu(x;\varphi) = c(x) + Z(x)\varphi$ and diffusion coefficient $V = (V_{ij})$, $V_{ij}(x) = x_i(\gd_{ij}-x_j)$, the estimator $\hat{\varphi}$ in \eqref{eq:varphi-hat} is robust to $c(x)$ if and only if
\begin{equation}
\label{eq:noconfounding}
\sum_{i=1}^d \frac{1}{x_i}\frac{\partial a_i}{\partial \varphi_k}(x) c_i(x) = 0,
\end{equation}
for each $k=1,\dots,r$.
\end{proposition}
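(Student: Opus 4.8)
The plan is to reduce condition ($\ast$) to the pointwise identity \eqref{eq:noconfounding} by reusing, almost verbatim, the covariance-inverse computation already carried out in the proof of \thmref{thm:likelihood}. First I would observe that the scalar $\varphi^\top Z^\top V^{-1}c$ appearing in ($\ast$) must be read through the non-singular reduced covariance $V^*$, exactly as the likelihood itself was, and that both $a(x;\varphi) = Z(x)\varphi$ and $c(x)$ are admissible drift contributions, so $\sum_{i=1}^d a_i(x;\varphi) = 0$ and $\sum_{i=1}^d c_i(x) = 0$ (the latter being precisely what keeps $X$ in $\gD_{d-1}$). Feeding the inverse \eqref{eq:Vinverse} into $a^{*\top}V^{*-1}c^*$ and using these two zero-sum constraints to collapse the $x_d^{-1}$ cross-terms is the identical manipulation that produced \eqref{eq:like1stterm} and \eqref{eq:like2ndterm}; it yields the ``mixed'' version
\[
\varphi^\top Z(x)^\top V(x)^{-1}c(x) = \sum_{i=1}^d \frac{a_i(x;\varphi)\,c_i(x)}{x_i}.
\]

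Second, I would make the $\varphi$-dependence explicit by substituting the linear form \eqref{eq:linear}, $a_i(x;\varphi) = \sum_{k=1}^r \frac{\partial a_i}{\partial \varphi_k}(x)\,\varphi_k$, to obtain
\[
\sum_{i=1}^d \frac{a_i(x;\varphi)\,c_i(x)}{x_i} = \sum_{k=1}^r \varphi_k\,g_k(x), \qquad g_k(x) := \sum_{i=1}^d \frac{1}{x_i}\frac{\partial a_i}{\partial \varphi_k}(x)\,c_i(x).
\]
For each fixed $x$ in the interior of $\gD_{d-1}$ this is a homogeneous linear form in $\varphi$, and the final step is the elementary remark that such a form is independent of $\varphi$ exactly when every coefficient $g_k(x)$ vanishes. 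The ``if'' direction is immediate. For ``only if'', since the parameters range freely (they are the quantities being estimated), subtracting the expression evaluated at two distinct admissible values of $\varphi$ shows the linear part must vanish, forcing $g_k(x) = 0$ for each $k$; requiring this at every interior $x$ then gives \eqref{eq:noconfounding}.

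I do not expect a genuine obstacle here, because the substantive ingredient—the simplification of $V^{-1}$ acting on zero-sum vectors—is already in hand from \thmref{thm:likelihood}, and everything else is bookkeeping. The one point deserving a line of care is the concluding biconditional: I must confirm the parameter space is rich enough that independence of a homogeneous linear form in $\varphi$ really forces each coefficient to zero, rather than merely forcing the form to be constant in $\varphi$. For the models at hand, with $\Theta = [0,\infty)$ (or a product of such intervals for the multi-locus case), the $\varphi_k$ vary over a set that affinely spans the relevant directions, so this holds automatically and the equivalence is clean.
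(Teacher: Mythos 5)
Your proposal is correct and follows essentially the same route as the paper's proof: reduce to the first $d-1$ coordinates, apply the inverse formula \eqref{eq:Vinverse} together with the zero-sum constraints on $a$ and $c$ to collapse the cross-terms into $\sum_{i=1}^d x_i^{-1}\frac{\partial a_i}{\partial \varphi_k}(x)c_i(x)\varphi_k$, and then argue that a linear form in $\varphi$ is $\varphi$-independent precisely when all its coefficients vanish. Your extra remark about the parameter space spanning enough directions is a sensible point of care that the paper leaves implicit, but it changes nothing substantive.
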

\begin{proof}
We determine ($\ast$) for the first $d-1$ coordinates of the Wright--Fisher diffusion, with $[V^*(x)]^{-1}$ as in \eqref{eq:Vinverse}:
\begin{align*}
{\varphi}^\top Z(x)^\top [V^*(x)]^{-1}c(x) &= \sum_{k=1}^r \varphi_k \sum_{i=1}^{d-1}\left(\frac{1}{x_i}\frac{\partial a_i}{\partial \varphi_k}(x) - \frac{1}{x_d}\frac{\partial a_d}{\partial \varphi_k}(x)\right)c_i(x)\\
&= \sum_{k=1}^r \varphi_k \sum_{i=1}^{d}\frac{1}{x_i}\frac{\partial a_i}{\partial \varphi_k}(x)c_i(x), & c_d(x) &:= -\sum_{i=1}^{d-1}c_i(x).
\end{align*}
Since $\frac{\partial a_i}{\partial \varphi_k}(x)$ and $c_i(x)$ do not depend on $\varphi$, the above quantity is a linear combination of the $\varphi_k$. It does not depend on any $\varphi_k$ if and only if each of its coefficients is zero, i.e.~\eqref{eq:noconfounding} holds.
\end{proof}
To give another example of the applicability of \propref{prop:noconfounding}, consider reversing the roles of selection and recombination, so that we are interested in designing an estimator for (non-epistatic) selection at locus A in the confounding presence of recombination. For simplicity we focus on a genic selection model without mutation:
\begin{align*}
c_{ij}(x) &= \rho (x_{i\cdot}x_{\cdot j} - x_{ij}),\\
a_{ij}(x; s^\A_1,\dots,s^\A_K) &= 
\frac{x_{ij}}{2}\left(s_i^\A - \sum_{k=1}^K s_k^\A x_{k\cdot}\right), \qquad i=1,\dots,K;\; j=1,\dots,L.
\end{align*}
In this model we find
\begin{align*}
\sum_{i=1}^K\sum_{j=1}^L \frac{1}{x_{ij}}\frac{\partial a_{ij}}{\partial s^\A_k}(x) c_{ij}(x) &= \sum_{i=1}^K\sum_{j=1}^L \frac{1}{x_{ij}}\frac{x_{ij}}{2}(\gd_{ik} - x_{k\cdot})\rho(x_{i\cdot}x_{\cdot j} - x_{ij}) = 0,
\end{align*}
and \eqref{eq:noconfounding} holds; by \propref{prop:noconfounding} the estimator $\hat{s}^\A$ for $(s^\A_1,\dots,s^\A_K)$ is robust to recombination, as we might hope.

Using \propref{prop:noconfounding} it is also possible to show that the following problems are robust:
\begin{enumerate}[(i)]
\item Estimation of mutation at locus A when there is selection at locus B,
\item Estimation of genic selection at locus A when there is mutation at locus B;
\end{enumerate}
while the following problems are \emph{not} robust:
\begin{enumerate}[(i)]
 \setcounter{enumi}{2}
\item Estimation of recombination when there is mutation at either locus,
\item Estimation of mutation at locus A when there is mutation at locus B,
\item Estimation of mutation at locus A when there is recombination,
\item Estimation of mutation at locus A when there is selection at locus A,
\item Estimation of genic selection at locus A when there is mutation at locus A;
\end{enumerate}
similarly for problems interchanging the two loci. We omit the straightforward calculations. We caution that in the estimation problems above, the likelihood, and thus the estimator $\hat{\varphi}$, will be valid only up to time $S$ as in \eqref{eq:stopping}. It is possible to have $I_T = \infty$ even in models without recombination (in particular, we expect two path measures with different mutation parameters to be mutually singular if certain allele frequencies reach 0).
\subsection{Joint estimation of recombination and selection}
\label{sec:joint}
In contrast to \sref{sec:confounding-sel}, one might recognise the possible existence of selection and be interested in constructing a joint estimator for recombination and selection. How does the marginal estimator for $\rho$ from this compare to those already developed? To illustrate the idea, we consider a simple genic selection model at locus A in which only allele $k$ is under selection; that is, $s^\A_i = 0$ for $i\neq k$:
\begin{align*}
c_{ij}(x) &= \frac{\tA}{2}\sum_{k=1}^K x_{kj}(P_{ki}^\A - \delta_{ik}) + \frac{\tB}{2}\sum_{l=1}^L x_{il}(P_{lj}^\B - \delta_{jl}),\\
a_{ij}(x; \rho, s^\A_k) &= \rho (x_{i\cdot}x_{\cdot j} - x_{ij}) + 
\frac{x_{ij}}{2}\left(\gd_{ik}s_k^\A - s_k^\A x_{k\cdot}\right), \qquad i=1,\dots,K;\; j=1,\dots,L.
\end{align*}
From \eqref{eq:varphi-hat} and \eqref{obsinf} we find
\begin{align*}
I_T &= \begin{pmatrix}
\ds\int_0^T\sum_{i=1}^K\sum_{j=1}^L\frac{(X_{ij}(t)-X_{i\cdot}(t)X_{\cdot j}(t))^2}{X_{ij}(t)}\dd t & 0\\
0 & \ds\frac{1}{4}\int_0^T X_{k\cdot}(t)(1-X_{k\cdot}(t)) \dd t
\end{pmatrix},\\
Y &= \begin{pmatrix} \ds\int_{0}^{T}\sum_{i=1}^K\sum_{j=1}^L\frac{X_{i\cdot}(t)X_{\cdot j}(t)}{X_{ij}(t)}\dd\widetilde{X}_{ij}(t)\\ \ds\frac{1}{2}\int_0^T \dd\widetilde{X}_{k\cdot}(t)\end{pmatrix},
\end{align*}
and so $\hat{\varphi} = I_T^{-1}Y$ simplifies to
\[
\hat{\varphi} = \begin{pmatrix} \hat{\rho} \\
\hat{s}_k^\A \end{pmatrix},
\]
where $\hat{\rho}$ is the same estimator as we found in \eqref{estimate} and
\[
\hat{s}_k^\A = \frac{2(\widetilde{X}(T) - \widetilde{X}(0))}{\int_0^T X_{k\cdot}(t)(1-X_{k\cdot}(t)) \dd t}.
\]
Setting $\tA = \tB = 0$ recovers the estimator for selection found by \citet{wat:1979}, up to a choice of timescale. The key point is that the presence of selection as a `known-unknown' leaves the estimator for $\rho$ unaffected, as is clear from the diagonal nature of $I_T$. In fact this might have been predicted even earlier: requiring $I_{kl} = 0$ for the off-diagonal entry of an observed information matrix, corresponding to two parameters $\varphi_k$, $\varphi_l$, is essentially equivalent to the robustness condition ($\ast$). (To see this we identify the $c_i(x)$ term in ($\ast$) with $Z_{il}\varphi_l$, so $\varphi_l$ parametrises what would have been a confounder in ($\ast$).)

\section{Simulation study}
\label{sec:simulation}
In this section we conduct an empirical study of the properties of $\hat{\rho}_{\text{MLE}}$ by simulation. Although there has been recent progress in the development of algorithms for \emph{exact} simulation of certain classes of Wright--Fisher diffusion \citep{jen:spa:2017, gri:etal:2018, gar:etal:2021}, these algorithms do not cover the non-reversible diffusions considered in this paper. Instead we resort to simple Euler--Maruyama simulation; that is, to simulate small increments of the diffusion over a fixed, small timestep $\gD t$ using the approximation
\[
X(t+\gD t) = X(t) + [c(X(t)) + a(X(t); \varphi)]\gD t + \gs(X(t))[W(t+\gD t) - W(t)], \quad X(0) = x(0),
\]
where $W(t)$ is the $(d-1)$-dimensional Brownian motion in \eqref{eq:SDE}. Integrals involving the sample path of $X$ can be approximated using Riemann sums constructed from the same set of gridpoints.

Because of its singularities at the boundaries of $\gD_{d-1}$, the Cholesky decomposition of $V(x)$ of \citet{sat:1976} is perhaps not the best choice of $\gs(x)$ for the purposes of simulation, a point also noted in \citet{he:etal:2020}. Instead we use a decomposition introduced by \citet{pal:2011, pal:2013}: 
\[
\gs_{\bfi\bfj}(x) = \sqrt{x_\bfi}(\gd_{\bfi\bfj} - \sqrt{x_{\bfi}x_{\bfj}}), \qquad \bfi,\bfj = 1,\dots,d.
\]
This formulation has the advantage of being simple, symmetric, bounded in $x$, and vectorising easily via
\[
\gs(x) = (I_d - \text{diag}(x)1_d)\text{diag}(\sqrt{x}),
\]
where $I_d$ is the identity matrix and $1_d$ is the $d\times d$ matrix of ones. A disadvantage is that it uses a $d$-dimensional Brownian motion, one dimension more than is necessary for simulation.

Using Euler--Maruyama simulation it is possible to obtain a realisation with $X_{ij}(t + \gD t) \leq 0$ for some $(i,j)$. If this occurs we set $X_{ij}(t + \gD t) = 0$, renormalize $X(t+\gD t)$ so that $X(t+\gD t) \in \gD_{d-1}$, and set $I_{t+\gD t} = \infty$.

In the following we posit a two-locus, diallelic ($K=L=2$) model with symmetric mutation ($P^\A = P^\B = \left(\begin{smallmatrix} 1/2\\1/2 \end{smallmatrix}\right)$) and initial condition $X(0) = \left(\begin{smallmatrix} 2/5 & 1/5\\ 1/5 & 1/5\end{smallmatrix}\right)$. The stepsize is set to $\gD t = 10^{-6}$ and paths are simulated up to a time $T = 1$. We consider two sets of mutation parameters: (i) $\tA = \tB = 1$, and (ii) $\tA = \tB = 5$, in order to distinguish models in which the boundaries can or cannot be approached. We explore a variety of recombination parameters, $\rho \in \{0,0.1,1,2.5,5,10,25\}$, and to estimate distributional properties of the estimator we repeat each experiment 100 times.

As an illustration and a check that our implementation is accurate, examples of individual sample paths for $\rho = 5$ are shown in \fref{fig:traj1} and \fref{fig:traj2}, together with the accumulated information, $I_t$, and the evolving error, $\hat{\rho}_{\text{MLE}} - \rho$, as functions of time. As is clear from the Figures, the error is stochastically converging towards 0, with erratic jumps towards 0 in regions where the information accumulates most rapidly. In the second example, in which $\tA=\tB=1$, the trajectory for $X_{22}(t)$ wanders sufficiently closely to 0 that $I_t = \infty$ for some $t < T$, whereupon $\hat{\rho}_{\text{MLE}} = \rho$. The distribution of $\hat{\rho}_{\text{MLE}}$ across 100 experiments using these parameters are shown in \fref{fig:hist}, with results for further experiments summarised in \tref{tab:main}. 

\begin{figure}[p]
\begin{center}
\begin{tabular}{cc}
\includegraphics[width=0.85\textwidth]{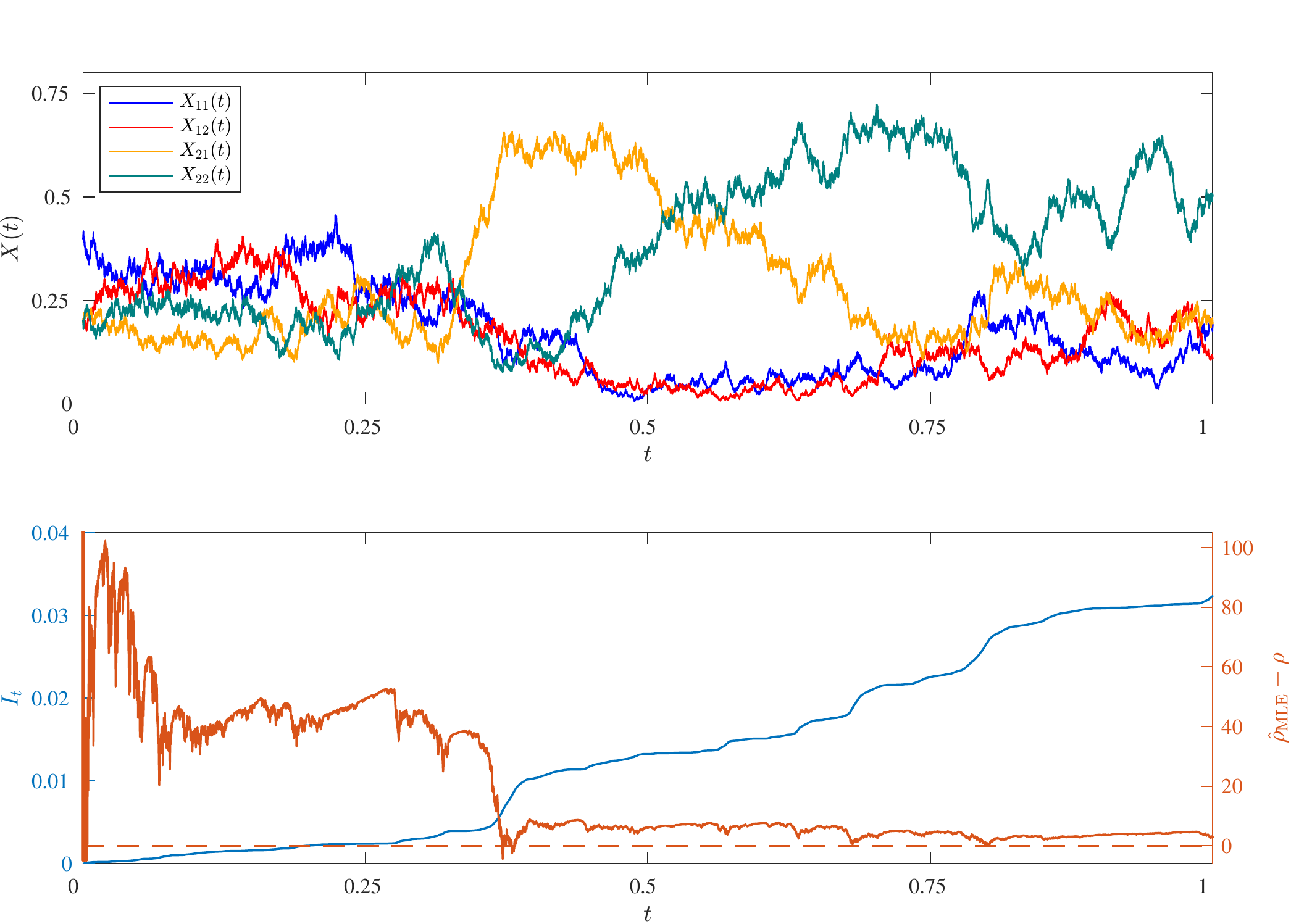} \end{tabular} % traj_rhohatMLEerror_tA5e0_rho5_N1e6_x0b
\end{center}
\caption{\label{fig:traj1}Example trajectories in a two-locus model with two alleles at each locus, $\rho = 5$, and $\tA=\tB=5$. Also shown in the lower plots are the trajectories of $\hat{\rho}_{\text{MLE}}-\rho$ and $I_t$ for this sample path.}
\end{figure}

\begin{figure}[p]
\begin{center}
\begin{tabular}{cc}
\includegraphics[width=0.85\textwidth]{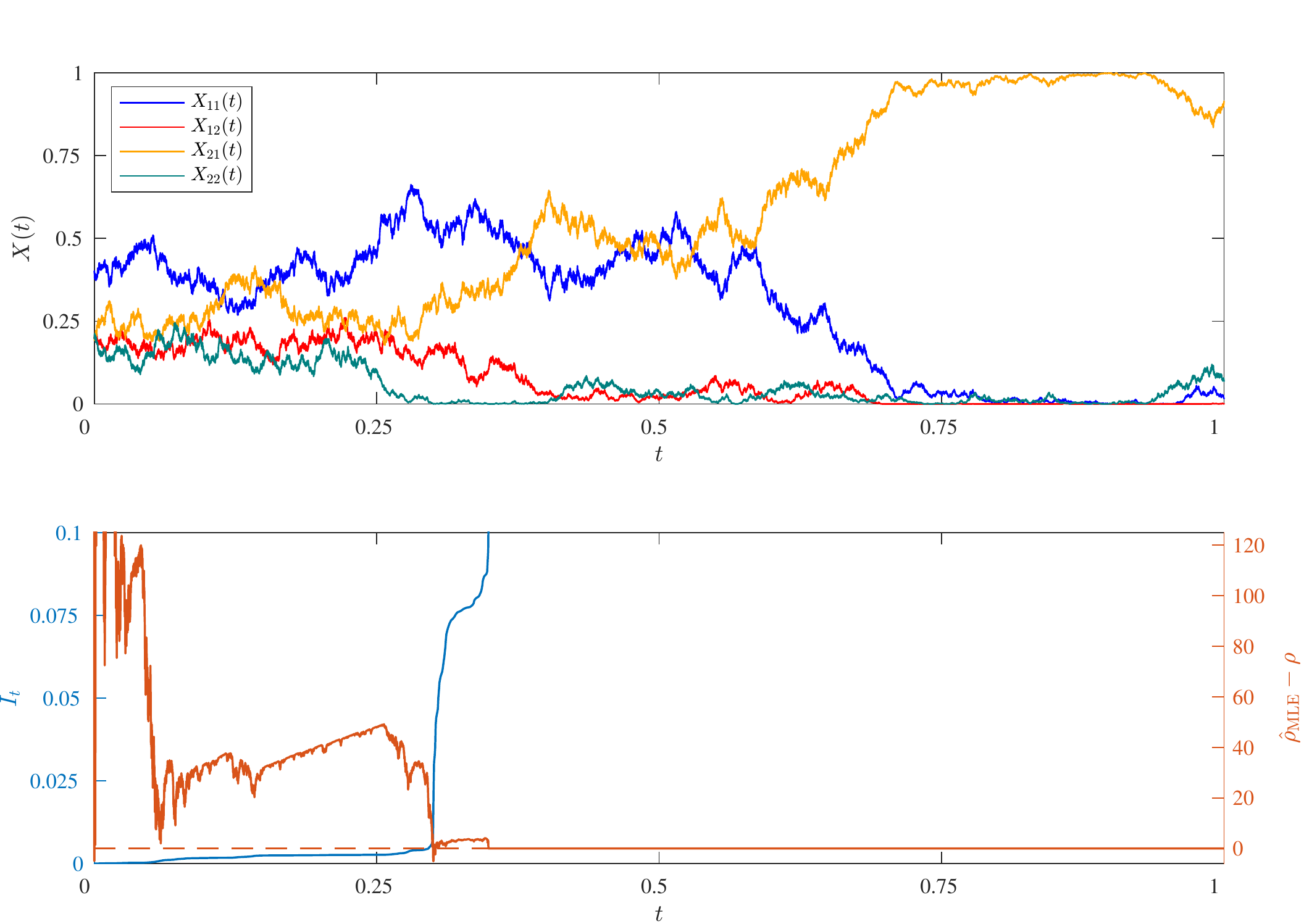} % traj_rhohatMLEerror_tA1e0_rho5_N1e6_x0b
\end{tabular}
\end{center}
\caption{\label{fig:traj2}As \fref{fig:traj1} but with $\tA=\tB=1$.}
\end{figure}

\begin{figure}[p]
\begin{center}
\begin{tabular}{cc}
$\tA = \tB = 5$ & $\tA = \tB = 1$\\
\includegraphics[width=0.45\textwidth]{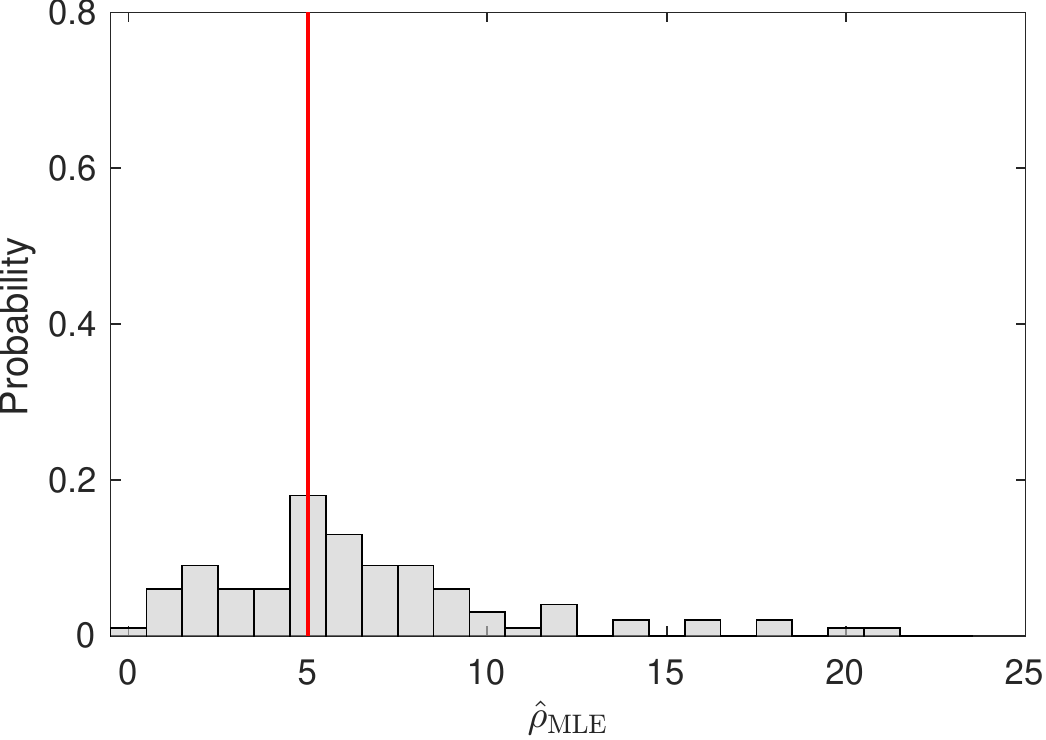} % hist_tA5e0_rho5_N1e6_x0b_lims
& \includegraphics[width=0.45\textwidth]{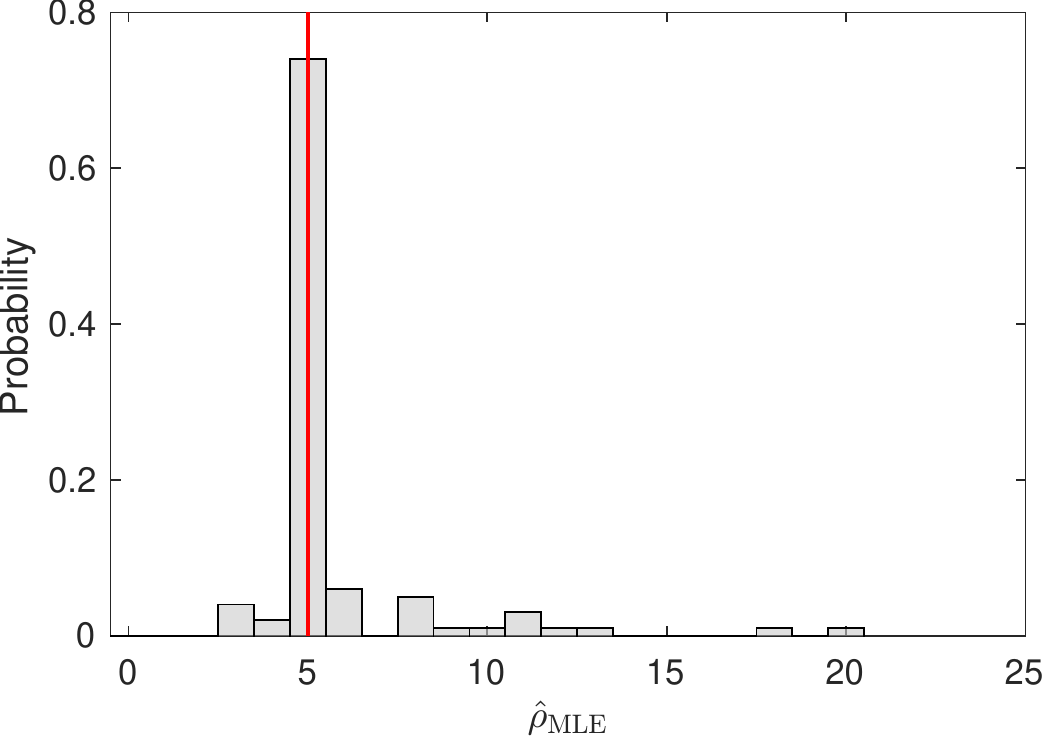} % hist_tA1e0_rho5_N1e6_x0b_lims
\end{tabular}
\end{center}
\caption{\label{fig:hist}Distribution of $\hat{\rho}_{\text{MLE}}$ estimated from 100 replicates. Mutation parameters are $\tA=\tB=5$ (left) and $\tA=\tB=1$ (right). The true recombination parameter, shown by a red line, is $\rho = 5$.}
\end{figure}

\newcolumntype{d}[1]{D{.}{.}{#1} }
\begin{table}[p]
\begin{center}
{\scriptsize
  \begin{tabular}{ d{1} | r@{.}l r@{.}l r@{.}l r@{.}l r@{.}l r@{.}l r@{.}l }
  \hline
    \multicolumn{15}{c}{$\tA = \tB = 5$}\\
    \hline
    \rho & \multicolumn{2}{c}{$\bbE(\hat{\rho}_{\text{MLE}})$} %& \multicolumn{2}{c}{$\text{Bias}(\hat{\rho}_{\text{MLE}})$}
     & \multicolumn{2}{c}{$\var(\hat{\rho}_{\text{MLE}})$} & \multicolumn{2}{c}{$\bbE(\hat{\rho}_{\text{MLE},(5)})$} & \multicolumn{2}{c}{$\text{Median}(\hat{\rho}_{\text{MLE}})$} & \multicolumn{2}{c}{$\bbE(\hat{\rho}_{\text{MLE},(95)})$} & \multicolumn{2}{c}{$\bbP(\hat{\rho}_{\text{MLE}} = \rho)$} & \multicolumn{2}{c}{$\bbP(\Lambda > F_m^{-1}(0.95))$}\\ \hline
           0 &       2&69 & %      2&69 &
                  13&52 &            0&00 &       1&06 &       9&81 &           0&11 & 0&10 \\
          0.1 &       2&58 & %      2&48 &
                 13&57 &            0&00 &      0&92 &       9&99 &           0&09 & 0&19\\
            1 &       2&69 &  %     1&69 &
                   11&15 &            0&00 &       1&45 &       9&55 &            0&09 & 0&18 \\
          2.5 &       4&75 & %      2&25 &
                 20&25 &            0&00 &       3&36 &       14&98 &            0&08 & 0&32 \\
            5 &       6&74 &  %     1&74 &
                   20&18 &       1&41 &        5&89 &       16&76 &            0&07 & 0&38 \\
           10 &       12&44 & %      2&43 &
                  39&94 &       4&26 &       10&94 &        26&49 &            0&04 & 0&79 \\
           25 &       28&38 & %      3&38 &
                  69&79 &       16&89 &       27&36 &       44&70 &            0&02 & 1&00\\
     \hline \multicolumn{15}{c}{}
  \end{tabular}
    \begin{tabular}{ d{1} | r@{.}l r@{.}l r@{.}l r@{.}l r@{.}l r@{.}l r@{.}l }
  \hline
    \multicolumn{15}{c}{$\tA = \tB = 1$}\\
    \hline
    \rho & \multicolumn{2}{c}{$\bbE(\hat{\rho}_{\text{MLE}})$} %& \multicolumn{2}{c}{$\text{Bias}(\hat{\rho}_{\text{MLE}})$}
     & \multicolumn{2}{c}{$\var(\hat{\rho}_{\text{MLE}})$} & \multicolumn{2}{c}{$\bbE(\hat{\rho}_{\text{MLE},(5)})$} & \multicolumn{2}{c}{$\text{Median}(\hat{\rho}_{\text{MLE}})$} & \multicolumn{2}{c}{$\bbE(\hat{\rho}_{\text{MLE},(95)})$} & \multicolumn{2}{c}{$\bbP(\hat{\rho}_{\text{MLE}} = \rho)$} & \multicolumn{2}{c}{$\bbP(\Lambda > F_m^{-1}(0.95))$}\\ \hline
             0 &      0&21 & %     0&21 &
                    1&07 &            0&00 &            0&00 &       0&60 &         0&95 & 0&03 \\
          0.1 &      0&34 &  %    0&24 &
                 1&61 &          0&10 &          0&10 &       1&07 &         0&95 & 0&98 \\
            1 &       1&22 &   %   0&22 &
                    1&38 &            1&00 &            1&00 &       1&96 &         0&95 & 0&98 \\
          2.5 &       3&07 &   %   0&57 &
                 2&78 &          2&50 &          2&50 &       6&69 &         0&77 & 0&92 \\
            5 &       5&79 & %     0&79 &
                    6&55 &       4&09 &            5&00 &           11&00 &         0&71 & 1&00 \\
           10 &       10&73 & %     0&73 &
                  4&88 &       9&74 &           10&00 &       17&29 &         0&75 & 1&00 \\
           25 &       26&49 & %      1&49 &
                  27&72 &       20&64 &           25&00 &       36&30 &         0&65  & 1&00\\
    \hline
  \end{tabular}
  }
\caption{\label{tab:main}Distributional summaries of $\hat{\rho}_{\text{MLE}}$ for $\tA = \tB = 5$ (top) and $\tA = \tB = 1$ (bottom): mean, variance, 5th percentile, median, 95th percentile, frequency of zero error, and power to reject $\rho_0 = 0$ at level 5\%. Each estimate is based on 100 independent replicates.}
\end{center}
 \end{table}
 
 As is clear from \tref{tab:main}, $\hat{\rho}_{\text{MLE}}$ is slightly upwardly biased, with the relative bias greater for $\rho$ close to 0. Even with our assumption that the entire sample path is observed, for $\tA=\tB=5$ the distribution of $\hat{\rho}_{\text{MLE}}$ is rather flat: for example, when $\rho=2.5$ the central 90\% of its mass is approximately contained in the interval $[0,15]$. The power to reject the hypothesis $\rho_0 = 0$ at level $5\%$ is consequently poor for small $\rho$, exceeding 0.5 only for the rows in the table with $\rho \geq 10$.
 
 For $\tA = \tB = 1$ the picture is very different, demonstrating the sensitivity of $\hat{\rho}_{\text{MLE}}$ to the mutation parameters. We can see that here there is high probability that $\bbP(\hat{\rho}_{\text{MLE}} = \rho)$, providing very high power to reject $\rho_0 = 0$ even for small $\rho > 0$.

\section{Discussion}
\label{sec:discussion}
In this article we have derived an expression for the maximum likelihood estimator of the recombination rate, $\hat{\rho}_{\text{MLE}}$, from a continuously observed diffusion model of haplotype frequencies. As well as recombination, the diffusion model can incorporate mutation, selection, and genetic drift. We have investigated the empirical properties of the estimator and its robustness to the presence of other processes. We have shown that, contrary to a typical estimator, it is possible to have $\hat{\rho}_{\text{MLE}} = \rho$ with positive probability, and this event is intimately associated with the hitting of the boundary by the diffusion (\thmref{thm:explode}). Although in that theorem we made some convenient assumptions about the trajectory of $X(t)$, we expect it is possible to refine this result further; indeed we conjecture that $\{I_T = \infty\}$ is equal to the event that one haplotype frequency reaches 0 by time $T$. This would provide an easy way to check whether $\{I_T = \infty\}$ has occurred.

Although \thmref{thm:error} and \thmref{thm:explode} are written in statistical language, in terms of estimators and information, we can gain some further intuition by phrasing them in a more fundamental way: it is known that the non-explosion condition \eqref{eq:information} holds if and only if $\bbP_\varphi^{(T)} \ll \bbP_{\varphi_0}^{(T)}$ \citep{hob:rog:1998}. Thus, when estimating recombination (or mutation, but not selection), hitting a boundary of the diffusion leads to the loss of absolute continuity of one path measure with respect to another. The information $I_T$ provides a natural measure of `signal-to-noise'. From the point of view of a finite population, although one usually thinks of stochastic effects as being more important when an allele is very rare compared to when it is common, on the contrary what matters in the diffusion limit here is that the variance in offspring distribution (noise) goes to zero at the boundary while the mean detectable effect of recombination (signal) does not. (The qualitatively different behaviour at a boundary between a finite population model and its diffusion limit is also remarked on by \citet[p180]{ewe:2004:I}.) This also explains the effects of the mutation rate on estimation of $\rho$ as observed in \sref{sec:simulation}: higher mutation rates act to push haplotype frequencies toward the interior of the simplex, where the accumulation of information is slower. It also matches biological intuition: if mutation rates are very small, we can reject a null of no recombination by using the four-gamete test on just a sample at a single time point. As mutation rates increase, it is harder to tell apart recurrent mutation from recombination.

Because of the unusual behaviour of the estimator for $\rho$, we have refrained from providing a detailed description of its asymptotic properties such as local asymptotic normality (LAN). Using the estimator for the immigration rate of the CBI diffusion as a guide, it should be possible to show that inference for $\rho$ exhibits LAN along the sequence of random times
\[
T_n := \inf\{t\in [0,T]:\: I_t = n\};
\]
see \citet[\S 3.4]{ove:1998}. However, in the case $I_T < \infty$ for each $T$, finding the asymptotic behaviour of $\hat{\rho}_{\text{MLE}}$ is more involved since we do not know the stationary distribution of $X$.

Finally, we observe that fundamental quantities appearing throughout this work are
\begin{equation}
\label{eq:newLD}
\sum_{i=1}^K\sum_{j=1}^L \frac{(X_{ij} - X_{i\cdot}X_{\cdot j})^2}{X_{ij}},\qquad \text{and}\qquad
\sum_{i=1}^K\sum_{j=1}^L \frac{X_{ij} - X_{i\cdot}X_{\cdot j} }{X_{ij}}.
\end{equation}
The role of $X_{ij}$ in the denominators has been to upweight the importance of those parts of the trajectories where the frequency of haplotype $(i,j)$ is small, since these regions are more informative for $\rho$. In one sense this is unsatisfactory since the diffusion model is often regarded as an approximation of a discrete population of size $N$, and behaviour near the boundaries is inappropriate when true frequencies can only be a multiple of $1/N$. One might prefer to replace the estimator $\hat{\rho}$, which integrates each $X_{ij}(t)$ over $[0,T]$, with one that integrates $X_{ij}(t)$ only over some sub-region
\[
\{t \in [0,T]:\: \varepsilon \leq X_{ij}(t) \leq 1-\varepsilon\}.
\]
Even under this restriction, the quantities in \eqref{eq:newLD} tell us to focus our attention on those regions where the haplotype frequency is far from $1/2$. The normalizations inherent in \eqref{eq:newLD} seem to offer `natural' new normalizations for the coefficient of linkage disequilibrium, which do not correspond to the usual normalizations found in $r^2$ and $D'$, for example \citep{sve:hil:2018}. Exploring the properties of these new summaries of LD will be the subject of future work.

\section*{Data availability}
Data and code used to conduct the simulation study is available at \url{https://github.com/Paul-Jenkins/GriffithsJenkins2023}.

\section*{Acknowledgements}
This work was supported by The Alan Turing Institute under the EPSRC grant EP/N510129/1. PJ acknowledges the organisers and participants of the Workshop on Genetic Recombination held at Bielefeld University, 08--09 Nov 2022, who provided helpful feedback on an earlier version of this work.

\bibliographystyle{myplainnat}
%\bibliography{/Users/paul/Documents/Journalstyles/master}
%\bibliography{DiffusionPath1}

\begin{thebibliography}{34}
\providecommand{\natexlab}[1]{#1}
\providecommand{\url}[1]{\texttt{#1}}
\expandafter\ifx\csname urlstyle\endcsname\relax
  \providecommand{\doi}[1]{doi: #1}\else
  \providecommand{\doi}{doi: \begingroup \urlstyle{rm}\Url}\fi

\bibitem[Barton et~al.(2004)Barton, Etheridge, and Sturm]{bar:etal:2004}
Barton, N.~H., Etheridge, A.~M., and Sturm, A.~K. (2004).
\newblock Coalescence in a random background.
\newblock \emph{Annals of Applied Probability}, {\bf 14}, 754--785.

\bibitem[Basawa and {Prakasa Rao}(1980)]{bas:pra:1980}
Basawa, I.~V. and {Prakasa Rao}, B. L.~S.
\newblock \emph{Statistical inference for stochastic processes}.
\newblock Probability and mathematical statistics. Academic Press, 1980.

\bibitem[Dehasque et~al.(2020)Dehasque, \'{A}vila Arcos,
  {D\'{\i}ez-del-Molino}, Fumagalli, Guschanski, Lorenzen, Malaspinas,
  Marques-Bonet, Martin, Murray, Papadopulos, Therkildsen, Wegmann, Dal\'en,
  and Foote]{deh:etal:2020}
Dehasque, M., \'{A}vila Arcos, M.~C., {D\'{\i}ez-del-Molino}, D., Fumagalli,
  M., Guschanski, K., Lorenzen, E.~D., Malaspinas, A., Marques-Bonet, T.,
  Martin, M.~D., Murray, G. G.~R., Papadopulos, A. S.~T., Therkildsen, N.~O.,
  Wegmann, D., Dal\'en, L., and Foote, A.~D. (2020).
\newblock Inference of natural selection from ancient {DNA}.
\newblock \emph{Evolution Letters}, {\bf 4}, 94--108.

\bibitem[Ethier and Kurtz(1986)]{eth:kur:1986}
Ethier, S.~N. and Kurtz, T.~G.
\newblock \emph{Markov processes: characterization and convergence}.
\newblock Wiley, New York, 1986.

\bibitem[Ewens(2004)]{ewe:2004:I}
Ewens, W.~J.
\newblock \emph{Mathematical Population Genetics}.
\newblock Springer-Verlag, New York, 2nd edition, 2004.

\bibitem[Garc{\'\i}a-Pareja et~al.(2021)Garc{\'\i}a-Pareja, Hult, and
  Koski]{gar:etal:2021}
Garc{\'\i}a-Pareja, C., Hult, H., and Koski, T. (2021).
\newblock Exact simulation of coupled {Wright--Fisher} diffusions.
\newblock \emph{Advances in Applied Probability}, {\bf 53}, 923--950.

\bibitem[Griffiths and Marjoram(1997)]{gri:mar:1997}
Griffiths, R.~C. and Marjoram, P.
\newblock An ancestral recombination graph.
\newblock In Donnelly, P. and Tavar\'e, S., editors, \emph{Progress in
  population genetics and human evolution}, volume~87, pages 257--270.
  Springer-Verlag Berlin, 1997.

\bibitem[Griffiths et~al.(2018)Griffiths, Jenkins, and Span\`o]{gri:etal:2018}
Griffiths, R.~C., Jenkins, P.~A., and Span\`o, D. (2018).
\newblock {Wright--Fisher} diffusion bridges.
\newblock \emph{Theoretical Population Biology}, {\bf 122}, 67--77.

\bibitem[Hayman et~al.(2022)Hayman, Ignatieva, and Hein]{hay:etal:biorxiv}
Hayman, E., Ignatieva, A., and Hein, J. (2022).
\newblock Recoverability of ancestral recombination graph topologies.
\newblock \emph{bioRxiv}.
\newblock \doi{10.1101/2021.10.10.463724}.

\bibitem[He et~al.(2020)He, Dai, Beaumont, and Yu]{he:etal:2020}
He, Z., Dai, X., Beaumont, M., and Yu, F. (2020).
\newblock Detecting and quantifying natural selection at two linked loci from
  time series data of allele frequencies with forward-in-time simulations.
\newblock \emph{Genetics}, {\bf 216}, 521--541.

\bibitem[Hobson and Rogers(1998)]{hob:rog:1998}
Hobson, D.~G. and Rogers, L. C.~G. (1998).
\newblock Complete models with stochastic volatility.
\newblock \emph{Mathematical Finance}, {\bf 8}, 27--48.

\bibitem[Ikeda and Watanabe(1977)]{ike:wat:1977}
Ikeda, N. and Watanabe, S. (1977).
\newblock A comparison theorem for solutions of stochastic differential
  equations and its applications.
\newblock \emph{Osaka Journal of Mathematics}, {\bf 14}, 619--633.

\bibitem[Jenkins and Song(2009)]{jen:son:2009:G}
Jenkins, P.~A. and Song, Y.~S. (2009).
\newblock Closed-form two-locus sampling distributions: accuracy and
  universality.
\newblock \emph{Genetics}, {\bf 183}, 1087--1103.

\bibitem[Jenkins and Span\`o(2017)]{jen:spa:2017}
Jenkins, P.~A. and Span\`o, D. (2017).
\newblock Exact simulation of the {Wright-Fisher} diffusion.
\newblock \emph{Annals of Applied Probability}, {\bf 27}, 1478--1509.

\bibitem[Kendall et~al.(1994)Kendall, Stuart, Ord, Arnold, and
  O'Hagan]{ken:etal:1994}
Kendall, M.~G., Stuart, A., Ord, J.~K., Arnold, S.~F., and O'Hagan, A.
\newblock \emph{Kendall's advanced theory of statistics. {Volume I}:
  {Distribution} theory}.
\newblock Edward Arnold, 6th edition, 1994.

\bibitem[Kloeden et~al.(2003)Kloeden, Platen, and Schurz]{klo:etal:2003}
Kloeden, P.~E., Platen, E., and Schurz, H.
\newblock \emph{Numerical solution of {SDE} through computer experiments}.
\newblock Springer, 2003.
\newblock Third printing.

\bibitem[Kutoyants(2004)]{kut:2004}
Kutoyants, Y.~A.
\newblock \emph{Statistical inference for ergodic diffusion processes}.
\newblock Springer-Verlag London, 2004.

\bibitem[Liptser and Shiryaev(2001)]{lip:shi:2001:I}
Liptser, R.~S. and Shiryaev, A.~N.
\newblock \emph{Statistics of random processes: {I. General Theory}}.
\newblock Springer Berlin, Heidelberg, 2001.

\bibitem[Mijatovi{\'c} et~al.(2012)Mijatovi{\'c}, Novak, and
  Urusov]{mij:etal:2012}
Mijatovi{\'c}, A., Novak, N., and Urusov, M.
\newblock Martingale property of generalized stochastic exponentials.
\newblock In \emph{S{\'e}minaire de Probabilit{\'e}s XLIV}, pages 41--59.
  Springer, 2012.

\bibitem[Ohta and Kimura(1969{\natexlab{a}})]{oht:kim:1969:G}
Ohta, T. and Kimura, M. (1969{\natexlab{a}}).
\newblock Linkage disequilibrium at steady state determined by random genetic
  drift and recurrent mutations.
\newblock \emph{Genetics}, {\bf 63}, 229--238.

\bibitem[Ohta and Kimura(1969{\natexlab{b}})]{oht:kim:1969:GRC}
Ohta, T. and Kimura, M. (1969{\natexlab{b}}).
\newblock Linkage disequilibrium due to random genetic drift.
\newblock \emph{Genetical Research}, {\bf 13}, 47--55.

\bibitem[O'Reilly et~al.(2008)O'Reilly, Birney, and Balding]{ore:etal:2008}
O'Reilly, P.~F., Birney, E., and Balding, D.~J. (2008).
\newblock Confounding between recombination and selection, and the {Ped/Pop}
  method for detecting selection.
\newblock \emph{Genome Research}, {\bf 18}, 1304--1313.

\bibitem[Overbeck(1998)]{ove:1998}
Overbeck, L. (1998).
\newblock Estimation for continuous branching processes.
\newblock \emph{Scandinavian Journal of Statistics}, {\bf 25}, 111--126.

\bibitem[Pal(2011)]{pal:2011}
Pal, S. (2011).
\newblock Analysis of market weights under volatility-stabilized market models.
\newblock \emph{Annals of Applied Probability}, {\bf 21}, 1180--1213.

\bibitem[Pal(2013)]{pal:2013}
Pal, S. (2013).
\newblock {Wright--Fisher} diffusion with negative mutation rates.
\newblock \emph{Annals of Probability}, {\bf 41}, 503--526.

\bibitem[{Pe\~{n}alba} and Wolf(2020)]{pen:wol:2020}
{Pe\~{n}alba}, J.~V. and Wolf, J. B.~W. (2020).
\newblock From molecules to populations: appreciating and estimating
  recombination rate variation.
\newblock \emph{Nature Reviews Genetics}, {\bf 21}, 476--492.

\bibitem[Reed and Tishkoff(2006)]{ree:tis:2006}
Reed, F.~A. and Tishkoff, S.~A. (2006).
\newblock Positive selection can create false hotspots of recombination.
\newblock \emph{Genetics}, {\bf 172}, 2011--2014.

\bibitem[Revuz and Yor(1999)]{rev:yor:1999}
Revuz, D. and Yor, M.
\newblock \emph{Continuous martingales and {Brownian} motion}.
\newblock Springer, 1999.
\newblock Third edition.

\bibitem[Sant et~al.(2022)Sant, Jenkins, Koskela, and Span\`o]{san:etal:2022}
Sant, J., Jenkins, P.~A., Koskela, J., and Span\`o, D. (2022).
\newblock Convergence of likelihood ratios and estimators for selection in
  nonneutral {Wright--Fisher} diffusions.
\newblock \emph{Scandinavian Journal of Statistics}, {\bf 49}, 1728--1760.

\bibitem[Sato(1976)]{sat:1976}
Sato, K. (1976).
\newblock Diffusion processes and a class of {Markov} chains related to
  population genetics.
\newblock \emph{Osaka Journal of Mathematics}, {\bf 13}, 631--659.

\bibitem[Self and Liang(1987)]{sel:lia:1987}
Self, S.~G. and Liang, K.-Y. (1987).
\newblock Asymptotic properties of maximum likelihood estimators and likelihood
  ratio tests under nonstandard conditions.
\newblock \emph{Journal of the American Statistical Association}, {\bf 82},
  605--610.

\bibitem[Sved and Hill(2018)]{sve:hil:2018}
Sved, J.~A. and Hill, W.~G. (2018).
\newblock One hundred years of linkage disequilibrium.
\newblock \emph{Genetics}, {\bf 209}, 629--636.

\bibitem[Taylor(2007)]{tay:2007}
Taylor, J. (2007).
\newblock The common ancestor process for a {Wright-Fisher} diffusion.
\newblock \emph{Electronic Journal of Probability}, {\bf 12}, 808--847.

\bibitem[Watterson(1979)]{wat:1979}
Watterson, G.~A. (1979).
\newblock Estimating and testing selection: the two-alleles, genic selection
  diffusion model.
\newblock \emph{Advances in Applied Probability}, {\bf 11}, 14--30.

\end{thebibliography}

\appendix
\section{Appendix}
\label{sec:appendix}
In this section we study the deterministic mutation-recombination equation
\begin{equation}
\frac{\dd x_{ij}}{\dd t} = \rho (x_{i\cdot}x_{\cdot j}-x_{ij}) + \frac{\tA}{2}(x_{\cdot j}P_{i}^\A - x_{ij}) + \frac{\tB}{2}(x_{i\cdot}P_{j}^\B - x_{ij}), \qquad i=1,\dots,K;\: j=1,\dots,L.
\label{de2}
\end{equation}
Summing \eqref{de2} over $j$ yields:
\[
\frac{\dd x_{i\cdot}}{\dd t} = \frac{\tA}{2}(P_{i}^\A - x_{i\cdot}), \qquad i=1,\dots,K,
\]
whose solution is
\[
x_{i\cdot}(t) = P^\A_i + (x_{i\cdot}(0) - P^\A_i)e^{-\frac{\tA}{2}t}, \qquad i=1,\dots,K.
\]
Similarly,
\[
x_{\cdot j}(t) = P^\B_j + (x_{\cdot j}(0) - P^\B_j)e^{-\frac{\tB}{2}t}, \qquad j=1,\dots,L.
\]
Therefore, \eqref{de2} can be written
\[
\frac{\dd x_{ij}}{\dd t}  + \left(\rho + \frac{\theta}{2}\right)x_{ij}(t) = \rho x_{i\cdot}(t)x_{\cdot j}(t) + \frac{\tA}{2}x_{\cdot j}(t)P^\A_i + \frac{\tB}{2}x_{i\cdot}(t)P^\B_j =: F(t),
\]
with $F(t)$ known and $\theta := \tA + \tB$. This can be solved via an integrating factor; the solution is
\begin{align*}
x_{ij}(t) = {} & e^{-\left(\rho + \frac{\theta}{2}\right)t}\left[x_{ij}(0) + \int_0^t F(s)e^{\left(\rho + \frac{\theta}{2}\right)s} \dd s\right]\\
= {} & \left(1-e^{-\left(\rho + \frac{\theta}{2}\right)t}\right)P^\A_iP^\B_j\\
& {}+ (x_{\cdot j}(0) - P^\B_j)P^\A_i\left(e^{-\frac{\tB}{2}t} - e^{-\left(\rho + \frac{\theta}{2}\right)t}\right)
 {}+ (x_{i\cdot}(0) - P^\A_i)P^\B_j\left(e^{-\frac{\tA}{2}t} - e^{-\left(\rho + \frac{\theta}{2}\right)t}\right)\\
& {}+ (x_{i\cdot}(0) - P^\A_i)(x_{\cdot j}(0) - P^\B_j)\left(e^{-\frac{\theta}{2}t} - e^{-\left(\rho + \frac{\theta}{2}\right)t}\right)
{}+ x_{ij}(0)e^{-\left(\rho + \frac{\theta}{2}\right)t}.
\end{align*}
Substituting the solutions for $x_{i\cdot}(t)$, $x_{\cdot j}(t)$, and $x_{ij}(t)$ into \eqref{eq:information-recombination} gives an expression for the observed information. After some simplification we arrive at
\[
I_T = \int_0^T \sum_{i=1}^K\sum_{j=1}^L \frac{D_{ij}(0)^2 e^{-2\left(\rho + \frac{\theta}{2}\right)t}}{D_{ij}(0)e^{-\left(\rho + \frac{\theta}{2}\right)t} + [P^\A_i + (x_{i\cdot}(0)-P^\A_i)e^{-\frac{\tA}{2}t}][P^\B_j + (x_{\cdot j}(0)-P^\B_j)e^{-\frac{\tB}{2}t}]} \dd t.
\]
where $D_{ij}(0) = x_{ij}(0)-x_{i\cdot}(0)x_{\cdot j}(0)$. Since the integrand of $I_T$ decays exponentially in $t$, clearly $I_\infty < \infty$ as before.
\end{document}